\newtheorem{theorem}{Theorem}[section]
\newtheorem{lemma}[theorem]{Lemma}
\newtheorem{corollary}[theorem]{Corollary}
\newenvironment{proof}[1][Proof]{\begin{trivlist}
\item[\hskip \labelsep {\bfseries #1}]}{\end{trivlist}}
\newenvironment{definition}[1][Definition]{\begin{trivlist}
\item[\hskip \labelsep {\bfseries #1}]}{\end{trivlist}}
\newenvironment{remark}[1][Remark]{\begin{trivlist}
\item[\hskip \labelsep {\bfseries #1}]}{\end{trivlist}}
\newcommand{\qed}{\nobreak \ifvmode \relax \else
      \ifdim\lastskip<1.5em \hskip-\lastskip
      \hskip1.5em plus0em minus0.5em \fi \nobreak
      \vrule height0.75em width0.5em depth0.25em\fi}
\begin{document}

\title{Exact and Efficient Simulation of Concordant Computation}

\author{Hugo Cable}
\email{Hugo.Cable@bristol.ac.uk}
\affiliation{Centre for Quantum Photonics, H. H. Wills Physics Laboratory and Department of Electrical and Electronic Engineering, University of Bristol, Merchant Venturers Building, Woodland Road, Bristol BS8 1UB, UK}

\author{Daniel E. Browne}
\email{d.browne@ucl.ac.uk}
\affiliation{Department of Physics and Astronomy, University College London, Gower Street, London WC1E 6BT, UK}

\date{\today}

\begin{abstract}
{\it Concordant computation} is a circuit-based model of quantum computation for mixed states, that assumes that all correlations within the register are discord-free (i.e. the correlations are essentially classical) at every step of the computation.  The question of whether concordant computation always admits efficient simulation by a classical computer was first considered by B. Eastin in quant-ph/1006.4402v1, where an answer in the affirmative was given for circuits consisting only of one- and two-qubit gates.  Building on this work, we develop the theory of classical simulation of concordant computation.  We present a new framework for understanding such computations, argue that a larger class of concordant computations admit efficient simulation, and provide alternative proofs for the main results of quant-ph/1006.4402v1 with an emphasis on the exactness of simulation which is crucial for this model.  We include detailed analysis of the arithmetic complexity for solving equations in the simulation, as well as extensions to larger gates and qudits.  We explore the limitations of our approach, and discuss the challenges faced in developing efficient classical simulation algorithms for all concordant computations.
\end{abstract}

\maketitle

\section{Introduction}

Understanding the hardness of simulating quantum computation on a classical computer is a central question in quantum-computing theory. Efforts to address this question are important to help identify candidates for quantum algorithms which outperform their classical counterparts. An essential aspect is to identify classes of quantum algorithms which \textit{fail} to admit any speed-up compared to their classical counterparts, since this can give us insights into the aspects of quantum mechanics that might be responsible for any quantum computational speedup. However, there are still relatively-few general results in this area. Often insight can be gained by the development of efficient simulation methods for certain families of quantum physical processes and quantum circuits.

A prominent example is the role of entanglement in unitary circuit-based quantum computation over pure states \cite{JozsaLinden03,Vidal03}.  For this model, exponential speedup with respect to classical computation  requires that certain measures of entanglement scale with problem size. This  indicates that entanglement plays an important role in pure-state circuit quantum computation.  However, entanglement-scaling on its own does not provide a sufficient condition for a computational speedup.  For example, highly-entangling circuits using only gates from the Clifford group can be efficiently simulated via the Gottesman-Knill theorem \cite{Gottesman98}. To add further nuance, there exist  models of universal quantum computation where certain entanglement measures may remain small and even tend to zero with growing computational size \cite{VanDenNest12}. These results demonstrate that the role played by entanglement in pure-state computations is a subtle one.

Much less is known, on the other hand, about quantum computation over mixed states. For pure states, absence of entanglement (separability) implies a tensor-product state, and coincides with the absence of any correlation.  However, for mixed states separability is a much weaker constraint than being uncorrelated, and the correlations in such states can exhibit both classical and non-classical correlations.  A long-standing question is whether general unitary circuits acting on separable states can be efficiently simulated classically \cite{JozsaLinden03} (i.e. assuming that the register remains separable at every stage of the computation).  Since classically-hard probability distributions can be sampled from simple quantum circuits \cite{IQP} and linear optical networks \cite{Bosonsampling}, quantum-generated states may be hard to classically simulate even in the absence of entanglement. Indeed, classical $N$-bit probability distributions require  exponentially-many parameters for their descriptions, just like entangled pure states.

A well-studied model of mixed-state computation is the DQC1 or ``one-clean-qubit'' model, which uses a (partially)-pure control qubit and a register of qubits prepared in the maximally-mixed state \cite{KnillLaflamme98}.  In this model, the normalized trace of a unitary circuit may be well approximated by the average of measurements on the control at the output.  The role of entanglement in DQC1 was studied in  Ref.~\cite{Datta05}. It was found that entanglement in the output state, quantified using multiplicative negativity across bipartite cuts, becomes a vanishingly-small fraction of the maximum possible as the size of the register increases \cite{Datta05}.

Later studies looked at the generation of discord in the output state of DQC1 \cite{Datta08,Dakic10}, looking specifically at the correlations between the control qubit and the entire register.  For ``typical'' unitaries, defined as unitaries sampled using the Haar measure, it was found that discord remains a fixed fraction of the maximum as the number of register qubits increases \cite{Datta08}.  We remark, however, that the normalized trace of Haar-random unitaries converges to zero as the size of the matrix gets large, since the corresponding eigenvalues are uniformly-distributed phases between $0$ and $2\pi$.  Hence the output of such DQC1 computations is known in this limit, and nothing can be concluded about algorithmic speedup.  Ref.~\cite{Dakic10} also provided a condition for the generation of no discord at the output of DQC1.  Nonetheless, the relationship between  entanglement and discord in cases of the DQC1 model with apparent algorithmic speedup remains little understood.

Important progress in the study of the role of correlations in mixed-state computation was made by Eastin in Ref.~\cite{Eastin10}, where \textit{concordant computation} was defined and first analysed.  A concordant state, sometimes called a ``fully-classical'' state, is defined as a state which is diagonal in the computational basis up to local-unitary transformation. A concordant computation is one which satisfies the promise that the state of the system remains concordant after each unitary gate in the computation.  Concordant states are closely related to classical probability distributions, their only non-classical attribute being the local-unitary freedom of their density-matrix eigenbasis. Thus they can be characterised via a probability distribution and local unitaries.

Monte-Carlo simulation has been  the basis for a number of methods for efficiently simulating physical processes and quantum circuits \cite{eisertsim,emersonsim,stahlke,payashan}.
Using a Monte-Carlo method, Eastin presents a general procedure \cite{Eastin10} for simulating concordant computations which samples the output statistics. He argues that it is an efficient algorithm on a classical computer when circuits are restricted to one- and two-qubit gates.
However, there are examples of concordant computation that admit efficient simulation but which do not fall within Eastins' results.

 Consider the following example of a concordant circuit for DQC1 using gates of unrestricted size: The initial state, comprising a pure control qubit and $N$ fully-mixed register qubits, is $\left\vert +\right\rangle \!\!\left\langle +\right\vert _{1}\otimes \left(1/2^{N}\right)\openone_{2\cdots N+1}$ (where $\vert\pm\rangle=\left(\vert0\rangle\pm\vert1\rangle\right)/\sqrt{2}$) and the circuit consists of a series of controlled-gates $G_{1}\cdots G_{t}$ which are Hermitian and diagonal in the computational basis, for which
$G_k\!:\left\vert 0\right\rangle _{1}\left\vert \textbf{x}\right\rangle _{2\cdots N+1}\mapsto $ $\left\vert 0\right\rangle _{1}\left\vert {\textbf{x}}\right\rangle _{2\cdots N+1}$
and
$G_k\!:\left\vert 1\right\rangle _{1}\left\vert \textbf{x}\right\rangle _{2\cdots N+1}\mapsto $ $\left( -1\right) ^{f_{k}\left( \textbf{x}\right) }\left\vert 1\right\rangle _{1}\left\vert \textbf{x}\right\rangle _{2\cdots N+1}$
(where $f_{k}(\textbf{x})$ takes values $0$ or $1$).  The output state is then
$(1/2^N)\sum_{\textbf{x}\vert f\left( \textbf{x}\right) =0}\left\vert +\right\rangle \!\! \left\langle +\right\vert _{1}\otimes \left\vert \textbf{x}\right\rangle \!\! \left\langle \textbf{x}\right\vert _{2\cdots N+1}+(1/2^N)\sum_{\textbf{x}\vert f\left( \textbf{x}\right) =1}\left\vert -\right\rangle \!\! \left\langle -\right\vert _{1}\otimes \left\vert \textbf{x}\right\rangle \!\! \left\langle \textbf{x}\right\vert _{2\cdots N+1}$
where
$f(\textbf{x})=\sum_{k=1}^{t}f_k\left( \textbf{x}\right) $
(mod 2), and the expectation value for measurements on qubit 1 in the $\vert\pm\rangle$ basis is the average value for $(-1)^{f(\textbf{x})}$ over all bit strings $\textbf{x}$.  This computation admits a straightforward Monte-Carlo simulation by sampling measurement outcomes for pure-state trajectories given input states $\left\vert +\right\rangle _{1}\left\vert \textbf{x}\right\rangle _{2\cdots N+1}$ (where \textbf{x} is a bitstring chosen uniformly at random).

In this paper, we develop new technical tools to understand concordant computation, and use these to extend and refine Eastin's results. In particular, we prove explicitly that our simulation is exact, an essential aspect of this model, since the concordance  of a state is not preserved under arbitrarily-small perturbations \cite{Ferraro10}.

We proceed as follows:  Sec.~\ref{Sec:IntuitiveNotions} provides an informal introduction to concordant computation and some key ideas for simulating it.  This includes general features of the states involved and specific requirements for simulating computations. Sec.~\ref{Sec:StateStructureSimulation} revisits the central results of Ref.~\cite{Eastin10} using a new formalism.  It provides self-contained proofs within our revised framework.  Sec.~\ref{Sec:GateStructureAndLBF} presents a new simulation procedure that bypasses a bottleneck when identifying symmetries of the system state, which is a critical part of the procedure of Ref.~\cite{Eastin10}.  In Sec.~\ref{Sec:IncompatibleLBSolns} we explain the limitations of our simulation procedure, before concluding in Sec.~\ref{Sec:Conclusions} with some discussion of the prospects of efficient simulations for all concordant computation.

\section{Overview}
\label{Sec:IntuitiveNotions}
In this section, we provide an overview of the techniques that are developed in the rest of this paper. We start with the following formal definition for a state to be concordant:
\begin{definition}
A state $\rho$, with $N$ qudit subsystems labeled by $j$, of arbitrary dimension $d_j$, is called concordant if every qudit possesses a complete set of orthogonal rank-1 projectors $\pi^{(j)}_{k_j}$ such that
\begin{equation}
\label{eq:defconcordantstate}
\rho=\sum_{k_1,k_2,\cdots,k_N} p\!\left({k_1,k_2,\cdots,k_N}\right) \pi^{(1)}_{k_1}\otimes\pi^{(2)}_{k_2}\cdots\otimes\pi^{(N)}_{k_N}
\end{equation}
where $p\!\left({k_1,k_2,\cdots,k_N}\right)$ is a probability distribution.
\end{definition}
Concordant states have the interpretation of being the only quantum states having zero non-classical correlation with respect to any bipartition of the subsystems (see for example Refs.~\cite{CCstates}).  (Non-classical correlations can be quantified using quantum discord \cite{Discord} or a variety of related measures Ref.~\cite{Modi12}.)  The basic premise of the model called concordant computation is that an algorithm is supplied, consisting of a choice of a concordant initial state, unitary circuit, and measurements, such that the quantum state remains concordant after each gate acts.  The generation (or ``encoding'') of algorithms is not of concern here --- only the simulation (``decoding'') of algorithms which satisfy the promise of concordant states between gates.

The most basic example of concordant computation is given by probabilistic classical computation using reversible gates --- which amounts to concordant computation in the computational basis.  The expectation values for observables at the output can be evaluated efficiently by a Monte-Carlo method, which uses simulation trajectories on bit (dit) strings which are computed directly from the circuit given for the computation.  More generally, concordant computations can involve entangling gates and changes to the local basis as the computation proceeds.  To illustrate, if a CNOT gate acts on a concordant state $\rho$, for which both qubits on the support of the gate are in the
$\vert\pm\rangle$ basis, then these basis elements are permuted and $\rho$ remains concordant.  However, if the control qubit is in the $\vert\pm\rangle$ basis and the target qubit is in the computational basis, then the gate maps basis elements to Bell states, and $\rho$ may or may not remain concordant depending on its symmetries: Only if $\rho$ is invariant under
$\vert +0\rangle \leftrightarrow \vert -0 \rangle$ and $\vert +1\rangle \leftrightarrow \vert -1 \rangle$ is concordance preserved.

More generally, the symmetries of the quantum state play a central role in concordant computation.  We say that a unitary $S$ with support on qudits $b$ is a symmetry of state $\rho$ if $S \rho S^\dag=\rho$.  The collection of all such unitaries (on $b$) defines a symmetry (sub)group.  For example, when $\rho$ has fully non-degenerate eigenvalues, and is diagonal in the computational basis, the symmetries include the identity operators and any phase gate.  To make the consequences of symmetry manifest, let us write concordant states in Eq.~\ref{eq:defconcordantstate} in a different form. This form is related to the original definition of a concordant state as a state of zero discord -- see Sec.~\ref{Sec:SubsystemEigenprojectorDecomposition} for more details. Given any concordant state $\rho$, and any  partition of  the qubits (or qudits) into subsets $a$ and $b$, $\rho$ can always be written
\begin{equation}
\label{eq:basicdecomposition}
\rho=\sum_k \tilde{\rho}_k^{(a)} \otimes \Pi_k^{(b)},
\end{equation}
where $\Pi_k^{(b)}$ is a set of orthogonal projectors which are related to the computational basis by local-unitary operations, and the $\tilde{\rho}_k^{(a)}$ are (unnormalised) density operators.  After collecting terms in the sum for which $\tilde{\rho}_k^{(a)}=\tilde{\rho}_{k^\prime}^{(a)}$, this decomposition of $\rho$ becomes unique (as proved in Sec.~\ref{Sec:SubsystemEigenprojectorDecomposition}) and we call the $\Pi^{(b)}_k$  in this case {\it Full-Rank Subsystem Eigenprojectors} (FRASEs).  Any symmetry $S$ of $\rho$ with support on $b$ then satisfies $S\Pi_k^{(b)}S^\dag=\Pi_k^{(b)}$ for the corresponding FRASEs.

When the spectrum of $\rho$ is fully non-degenerate, any set of gates which implements a concordant computation must map product states to product states at every step, and efficient Monte-Carlo trajectory simulation is (trivially) possible.  However, when concordant states have degenerate eigenvalues, the problem of classically simulating concordant computation becomes non-trivial and more interesting.  The degeneracy allows the gates specified in the problem to generate  trajectories which create entanglement but leave the state concordant. A computational basis representation of such a trajectory will require exponentially-growing resources.

The degeneracy in the quantum state, and the symmetries which follow from it, therefore disrupt na\"{i}ve trajectory simulation.  Fortunately, the degeneracy itself gives rise to a new way to construct trajectories by providing for families of equivalent unitary gates that lead to the same output state.  We will say that gates $G$ and $\tilde{G}$, with support on qudits $b$, are equivalent with respect to concordant state $\rho$, if $G\rho G^\dag = \tilde{G}\rho \tilde{G}^\dag$. It is easily verified that this last equation is also equivalent to the existence of a symmetry $S$ of $\rho$ on $b$ satisfying $\tilde{G}=GS$.   Hence, the challenge for efficient classical simulation of a concordant computation is  to find  circuits of equivalent gates which define trajectories with an efficient simulation (where the computational requirements of \emph{all} steps in the procedure are accounted for).

A key insight of Eastin in  Ref.~\cite{Eastin10} is that if a gate $G$ acting  on subset of qudits $b$ maps a concordant state to a concordant state, then it is  equivalent to the following sequence of gate operations which also act only on $b$: a local unitary, a classical reversible gate, and a second local unitary.  (By classical-reversible gate we mean a gate which permutes logical basis states, e.g. a CNOT, NOT or TOFFOLI gate.) The action of the two local unitaries is to first rotate the local basis of the state into the computational basis, and then rotate it  to the new local basis for the state.  If this set of alternative gates is known then a Monte-Carlo simulation will proceed via product states, and the classical simulation will be efficient. For any concordant computation this set of gates always exists, and the challenge is to efficiently compute it.

One approach to finding this equivalent gate set would be to identify all symmetries of the state (on $b$) and then search over gates to identify those with the needed properties.  However, the symmetry identification can involve exponentially-big matrices, since it involves an exhaustive search over the full state $\rho$, leading again to inefficient simulation.  Furthermore, any attempt to find trajectories that relies on testing on the whole quantum state (e.g. direct application of the criteria for classicality given in Ref.~\cite{Chen11}) will typically suffer from exponentially-scaling overheads.  As noted by Eastin, one can show that the identification of symmetries of an initially-uncorrelated state after a circuit of unitary gates (even a set of classical reversible gates) is NP hard in general.  Furthermore it is widely believed that not even universal quantum computers can solve NP-complete problems in polynomial time.

Nonetheless, Eastin argued that all necessary symmetry identification can be performed efficiently for concordant computations comprising circuits of one and two-qubit gates, together with some restriction on the form of the initial state (see Sec.~\ref{Sec:MonteCarloForConcordantComputation} for an alternative proof of this).  Note that, although a universal gate set can be obtained using only one-qubit and two-qubit gates \cite{NielsenChuang}, it does not follow that concordant computations comprising three-and-higher qubit gates always (efficiently) decompose into concordant computations with one-qubit and two-qubit gates. Furthermore, the question of when qudit-based concordant computation for $d>2$ (qudits) admits efficient classical simulation has remained entirely open so far.

In this paper, we develop a new approach to simulating concordant computation which allows us, in many cases, to go beyond the limitations discussed above. This will be described in detail in Sec.~\ref{Sec:GateStructureAndLBF} and Sec.~\ref{Sec:IncompatibleLBSolns}.  The key new idea is that it is often not necessary  to acquire full knowledge of the symmetries of the quantum state in order to identify classically-efficiently-simulable trajectories. In practice, the strict requirements upon the gate to leave the state concordant will often allow suitable trajectories to be extracted from individual quantum gates alone, side-stepping the NP-hard bottleneck in Eastin's algorithm with a tractable analysis on individual unitary gates.

The heart of our algorithm is a sub-routine that we call the \textit{Local-Basis Finder} (LBF).  In Sec.~\ref{Sec:ExplanationOfLBF} we provide a detailed analysis of the LBF, which aims to identify the local basis rotation which forms the first of the three unitary gates (local rotation, classical reversible gate, local rotation) which act equivalently to the unitary gate applied in the circuit. Knowing this local rotation, the other two gates needed for the efficient trajectory simulation can be efficiently derived.

We emphasise that a critical issue for the simulation of concordant computation is the effect of  numerical errors, such as rounding errors.  The set of concordant states has zero volume relative to the Hilbert space of all quantum states.  Small perturbations on concordant states will generate non-classical correlations (discord) \cite{Ferraro10}, and necessarily disrupt the state symmetries which Eastin's algorithm computes at every step. Thus simulation algorithms of this type have no tolerance to such errors, and they must therefore proceed via exact numerical calculation. We remark that even  the new algorithms introduced in this paper, where computing state symmetries is not always necessarily, require an exact representation of the local basis of the state for their successful implementation.

While Eastin did not consider this issue in \cite{Eastin10}, we show that his approach can be adapted to exact arithmetic while remaining efficient.   The adoption of exact arithmetic is a non-trivial and a key technical contribution of our work. Exact simulation means that one cannot use the approximate floating-point arithmetic typically used to approximate real or complex numbers in Physics simulations. We achieve this by adopting a combination of exact integer arithmetic and exact arithmetic on algebraic numbers. While the latter is not typically efficient \cite{CheeYap,LiLutzer04}, we show in Sec.~\ref{Sec:ExactnessAndEfficiency} that this computational cost is a fixed overhead and does not affect the scaling of our algorithm (and our exact version of Eastin's algorithm).

\section{Structure of concordant states and equivalent circuits for simulating concordant computation}
\label{Sec:StateStructureSimulation}

In this section, we revisit the key results of Ref.~\cite{Eastin10}, using an alternative argument with some new techniques that clarify how the approach works.  In Sec.~{\ref{Sec:SubsystemEigenprojectorDecomposition} we introduce new tools for understanding a notion of degeneracy which plays a central role in Ref.~\cite{Eastin10} and in the current work.  Then, in Sec.~\ref{Sec:MonteCarloForConcordantComputation}, we formally derive the general method for trajectory-based simulation of concordant computation using these tools.  We also review the difficulties encountered in Ref.~\cite{Eastin10} which centre around a step in the simulation algorithm --- termed ``Diagnosing the degeneracy'' --- which attempts to identity symmetries of the system state.

\subsection{Subsystem-eigenprojector decomposition for quantum-classical and concordant states}
\label{Sec:SubsystemEigenprojectorDecomposition}

We begin by introducing a key notion of classicality relevant for concordant computation.  A state $\rho$, with subsystems labeled $a$ and $b$, is said to be {\it classical with respect to $b$} if there exists a complete set of rank-one projectors $\{\pi^{(b)}_k\}$ on $b$ such that $\rho=\sum_k \pi^{(b)}_k \rho \pi^{(b)}_k$, or equivalently
$\rho=\sum_k p_k \rho^{(a)}_{\vert k}\otimes \pi^{(b)}_k$ where $\left\{p_k\right\}$ is a probability distribution. A state of this form is sometimes referred to as a {\it quantum-classical state}, and the $\rho^{(a)}_{\vert k}$ are sometimes denoted conditional density matrices \cite{Modi12}.

Consider the unnormalised conditional density matrix,
\begin{equation}
\tilde{\rho}^{(a)}_{k}=p_k \rho^{(a)}_{\vert k}.
\end{equation}
This operator satisfies an equation reminiscent of an eigenvalue equation,
\begin{equation}
\rho \pi^{(b)}_k = \tilde{\rho}^{(a)}_{k} \otimes \pi^{(b)}_k,
\end{equation}
with $\tilde{\rho}^{(a)}_{k}$ playing the role of the eigenvalue and $\pi^{(b)}_k$ the role of the eigenprojector. We shall see that in fact these operators do satisfy many of the properties of eigenvalues and eigenvectors respectively, and thereby provide a generalisation of them.

\begin{definition} A {\it subsystem operator-valued eigenvalue} (SOVE) $\tilde{\rho}^{(a)}$ and corresponding {\it subsystem eigenprojector} (SE) $\pi^{(b)}$  is any pair of such operators that satisfy,
\begin{equation}
\rho \left(\openone^{(a)} \otimes\pi^{(b)}_k\right)
= \left(\openone^{(a)} \otimes \pi^{(b)}_k\right) \rho
= \tilde{\rho}_k^{(a)} \otimes \pi^{(b)}_k,
\end{equation}
where $\pi^{(b)}$ is assumed have rank one, $\tilde{\rho}^{(a)}$ has support solely on $a$ and $\pi^{(b)}$ has support solely on $b$.
\end{definition}
For a state $\rho$ which is classical with respect to sub-system $b$, a SOVE can be computed from any corresponding SE via the equation
$\tilde{\rho}^{(a)}_k=\textrm{Tr}_{b}\,\left[(\openone\otimes \pi^{(b)}_{k}) \rho \right]$.

\begin{lemma}\label{SOVElemma}
Suppose that $\tilde{\rho}^{(a)}_1$ and $\tilde{\rho}^{(a)}_2$ are SOVEs with corresponding SEs $\pi^{(b)}_1$ and $\pi^{(b)}_2$.  If the SOVEs are distinct then the SEs must be orthogonal.
\end{lemma}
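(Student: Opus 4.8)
The plan is to exploit the two-sided intertwining built into the definition of a SOVE/SE pair: for each $k$ we have both $\rho\,(\openone^{(a)}\otimes\pi^{(b)}_k)=\tilde{\rho}_k^{(a)}\otimes\pi^{(b)}_k$ and $(\openone^{(a)}\otimes\pi^{(b)}_k)\,\rho=\tilde{\rho}_k^{(a)}\otimes\pi^{(b)}_k$. The object to focus on is the ``sandwiched'' operator $(\openone^{(a)}\otimes\pi^{(b)}_2)\,\rho\,(\openone^{(a)}\otimes\pi^{(b)}_1)$, which I would evaluate in two ways. Grouping as $(\openone^{(a)}\otimes\pi^{(b)}_2)\bigl[\rho\,(\openone^{(a)}\otimes\pi^{(b)}_1)\bigr]$ and using the relation for $k=1$ gives $(\openone^{(a)}\otimes\pi^{(b)}_2)(\tilde{\rho}_1^{(a)}\otimes\pi^{(b)}_1)=\tilde{\rho}_1^{(a)}\otimes(\pi^{(b)}_2\pi^{(b)}_1)$, since $\tilde{\rho}_1^{(a)}$ has support only on $a$. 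Grouping instead as $\bigl[(\openone^{(a)}\otimes\pi^{(b)}_2)\,\rho\bigr](\openone^{(a)}\otimes\pi^{(b)}_1)$ and using the relation for $k=2$ gives $(\tilde{\rho}_2^{(a)}\otimes\pi^{(b)}_2)(\openone^{(a)}\otimes\pi^{(b)}_1)=\tilde{\rho}_2^{(a)}\otimes(\pi^{(b)}_2\pi^{(b)}_1)$. Equating the two expressions yields $\tilde{\rho}_1^{(a)}\otimes(\pi^{(b)}_2\pi^{(b)}_1)=\tilde{\rho}_2^{(a)}\otimes(\pi^{(b)}_2\pi^{(b)}_1)$.

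Next I would dispose of the case $\pi^{(b)}_2\pi^{(b)}_1\neq 0$: if $A\otimes C=B\otimes C$ with $C\neq 0$ then $A=B$ (fix a nonzero matrix entry of $C$ and compare the corresponding blocks of the two tensor products). Applied to the displayed identity, this forces $\tilde{\rho}_1^{(a)}=\tilde{\rho}_2^{(a)}$, contradicting the hypothesis that the SOVEs are distinct. Hence $\pi^{(b)}_2\pi^{(b)}_1=0$. Finally, since $\pi^{(b)}_1$ and $\pi^{(b)}_2$ are rank-one projectors, hence Hermitian, taking adjoints also gives $\pi^{(b)}_1\pi^{(b)}_2=0$; writing $\pi^{(b)}_1=|v\rangle\!\langle v|$ and $\pi^{(b)}_2=|w\rangle\!\langle w|$ with $|v\rangle,|w\rangle$ unit vectors, $\pi^{(b)}_2\pi^{(b)}_1=\langle w|v\rangle\,|w\rangle\!\langle v|$, which vanishes precisely when $\langle w|v\rangle=0$, i.e.\ the SEs are orthogonal.

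The only point requiring a little care — the ``main obstacle,'' such as it is — is the cancellation step: one must check that the common tensor factor $\pi^{(b)}_2\pi^{(b)}_1$ really is a fixed nonzero operator before concluding $\tilde{\rho}_1^{(a)}=\tilde{\rho}_2^{(a)}$, and treat the alternative $\pi^{(b)}_2\pi^{(b)}_1=0$ directly, as that is exactly the conclusion sought. Everything else is a routine manipulation of the defining identities, and nothing beyond the rank-one (hence Hermitian) property of the SEs and the support conditions on the SOVEs is needed.
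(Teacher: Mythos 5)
Your proof is correct and takes essentially the same approach as the paper: both sandwich $\rho$ between $\openone^{(a)}\otimes\pi^{(b)}_2$ on the left and $\openone^{(a)}\otimes\pi^{(b)}_1$ on the right and exploit the two-sided intertwining relations for $k=1$ and $k=2$. The only cosmetic difference is that the paper immediately traces over $b$ to reduce the common factor to the scalar $\textrm{Tr}\bigl(\pi^{(b)}_2\pi^{(b)}_1\bigr)$, whereas you cancel the operator-valued tensor factor $\pi^{(b)}_2\pi^{(b)}_1$ directly and then note it vanishes exactly when the rank-one projectors are orthogonal.
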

\begin{proof}
The proof is identical to a well-known proof of the orthogonality of eigenprojectors with distinct eigenvalues, and we include it for completeness:
We have $\rho \left(\openone^{(a)} \otimes \pi^{(b)}_1\right) = \tilde{\rho}^{(a)}_1 \otimes \pi^{(b)}_1$ and $ \left(\openone^{(a)} \otimes \pi^{(b)}_2\right) \rho = \tilde{\rho}^{(a)}_2 \otimes \pi^{(b)}_2$. Hence, $\tilde{\rho}^{(a)}_1 \otimes \textrm{Tr}\left(\pi^{(b)}_2\pi^{(b)}_1\right)=\tilde{\rho}^{(a)}_2 \otimes \textrm{Tr}\left(\pi^{(b)}_2\pi^{(b)}_1\right)$ but if $\tilde{\rho}^{(a)}_1\neq \tilde{\rho}_2$, then the only solution to this equation is $\textrm{Tr} \left (\pi^{(b)}_2\pi^{(b)}_1 \right)=0$.
\end{proof}

For any given quantum-classical state $\rho=\sum_k \pi^{(b)}_k \rho \pi^{(b)}_k$, the SOVEs can be \textit{degenerate}, i.e. there can be two (or more) rank-one projectors $\pi^{(b)}_k$ and $\pi^{(b)}_{k'}$ such that  $p_k \rho^{(a)}_{\vert k}=p_{k'} \rho^{(a)}_{\vert k'}$.  Note however that the decomposition of $\rho$ here has a form reminiscent of a spectral decomposition of a Hermitian operator into eigenvalues and eigenprojectors.  An elementary result is that sets of orthogonal rank-one eigenprojectors of Hermitian operators are not unique when the spectrum includes degenerate eigenvalues, and that uniqueness is recovered when rank-one eigenprojectors are combined into full-rank eigenprojectors, corresponding to maximal subsets of rank-one eigenprojectors for distinct eigenvalues.
\begin{remark}
For any finite-dimensional Hermitian operator $\rho$, there is a \textit{unique} set of full-rank projectors $\Pi_{k}$ such that,
$\rho =\sum_{k}\textrm{Tr}\left( \rho \Pi_{k}\right) \Pi_{k}$, which also satisfy $\sum_{k}\Pi_{k}=\openone$ and
$\Pi_{k}\rho =\rho \Pi_{k}=\textrm{Tr}\left( \rho \Pi_{k}\right) \Pi_{k}$.
\end{remark}
Here the uniqueness follows from the full-rank property, and the orthogonality of the eigenprojectors associated with different eigenvalues. Alternatively, it follows directly as a corollory of Lemma \ref{lemma:UniqueQCDecomposition} below.


Proceeding now by analogy, we make the following definitions:
\begin{definition}
Let $\rho$ be quantum-classical state with respect to a bipartition into subsystems $a$ and $b$.  We define a {\it Full-Rank Subsystem Eigenprojector (FRASE)} for $\rho$ to be any SE $\Pi^{(b)}_{k}$, with rank $\geq1$, for which there does not exist any $ \pi^{(b)}$ on $b$ such that
$\rho \,\left [\openone\otimes\left( \Pi^{(b)}_{k}\!+\!\pi^{(b)}\right)\right]
=\tilde{\rho}^{(a)}_{k}\otimes \left(\Pi^{(b)}_{k}\!+\!\pi^{(b)}\right)$,
where
$\tilde{\rho}^{(a)}_{k}=\textrm{Tr}_b\,\left[\left(\openone\otimes \Pi^{(b)}_{k}\right) \rho \right]/\textrm{Tr}_{b}\left(\Pi^{(b)}_{k}\right)$.
Then we call a decomposition of the form,
\begin{equation}
\label{eq:SubsystemEigenprojectorDeomposition}
\rho =\sum_{k}\tilde{\rho}^{(a)}_{k}\otimes \Pi^{(b)}_{k}
\end{equation}
a {\it FRASE decomposition}, where $\Pi^{(b)}_{k}$ are orthogonal FRASEs satisfying $\sum_{k}\Pi^{(b)}_{k}=\openone$, and every $\tilde{\rho}^{(a)}_k$ is a distinct Hermitian operator on $a$.
\end{definition}
It is now straightforward to prove that FRASE decompositions can be made along similar lines to expansions of Hermitian operators in their eigenvalues and full-rank eigenprojectors:
\begin{lemma}
\label{lemma:UniqueQCDecomposition}
Every $\rho$ which is quantum-classical with respect to a bipartition into subsystems $a$ and $b$, possesses a unique FRASE decomposition as given by Eq.~(\ref{eq:SubsystemEigenprojectorDeomposition}).
\end{lemma}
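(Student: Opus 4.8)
The plan is to prove existence and uniqueness separately, in each case transcribing the corresponding fact about the spectral decomposition of a Hermitian operator into the present operator-valued setting, and using Lemma~\ref{SOVElemma} in place of the usual orthogonality-of-eigenspaces step.

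For existence, I would start from \emph{any} decomposition $\rho=\sum_k\tilde\rho^{(a)}_k\otimes\pi^{(b)}_k$ into rank-one projectors on $b$ (this is available by the hypothesis that $\rho$ is quantum-classical with respect to $b$), where $\tilde\rho^{(a)}_k=\textrm{Tr}_b[(\openone\otimes\pi^{(b)}_k)\rho]$. I then group the index set into classes on which $\tilde\rho^{(a)}_k$ is constant and set $\Pi^{(b)}_m=\sum_{k\in m}\pi^{(b)}_k$ for each class $m$. Linearity immediately gives $\sum_m\Pi^{(b)}_m=\openone$, orthogonality of the $\Pi^{(b)}_m$, the relation $\rho(\openone\otimes\Pi^{(b)}_m)=\tilde\rho^{(a)}_m\otimes\Pi^{(b)}_m$, the normalisation $\textrm{Tr}_b[(\openone\otimes\Pi^{(b)}_m)\rho]/\textrm{Tr}_b(\Pi^{(b)}_m)=\tilde\rho^{(a)}_m$, and pairwise distinctness of the $\tilde\rho^{(a)}_m$ by construction. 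The one nontrivial point is verifying the maximality (FRASE) condition: I must rule out a rank-one $\pi^{(b)}=|\phi\rangle\langle\phi|$ orthogonal to $\Pi^{(b)}_m$ with $\rho(\openone\otimes\pi^{(b)})=\tilde\rho^{(a)}_m\otimes\pi^{(b)}$. Expanding $|\phi\rangle$ in the eigenbasis of the $\pi^{(b)}_k$, this relation forces, by linear independence of the operators $|k\rangle\langle k'|$ on $b$, that $\tilde\rho^{(a)}_k=\tilde\rho^{(a)}_m$ for every $k$ with a nonzero component in $|\phi\rangle$, i.e. $|\phi\rangle\in\mathrm{range}\,\Pi^{(b)}_m$, contradicting orthogonality.

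For uniqueness, given two FRASE decompositions $\rho=\sum_k\tilde\rho^{(a)}_k\otimes\Pi^{(b)}_k=\sum_j\tilde\sigma^{(a)}_j\otimes\Sigma^{(b)}_j$, I would fix $k$, take an arbitrary unit vector $|\phi\rangle$ in $\mathrm{range}\,\Pi^{(b)}_k$, and note that orthogonality of the $\Pi^{(b)}_{k'}$ gives $\rho(\openone\otimes|\phi\rangle\langle\phi|)=\tilde\rho^{(a)}_k\otimes|\phi\rangle\langle\phi|$, so $|\phi\rangle\langle\phi|$ is a rank-one SE with SOVE $\tilde\rho^{(a)}_k$. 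Evaluating the same quantity from the second decomposition, with $|\phi\rangle=\sum_j|\psi_j\rangle$ and $|\psi_j\rangle=\Sigma^{(b)}_j|\phi\rangle$, and invoking linear independence of $|\psi_j\rangle\langle\psi_{j'}|$ (the $|\psi_j\rangle$ being mutually orthogonal) together with distinctness of the $\tilde\sigma^{(a)}_j$, exactly one $|\psi_j\rangle$ can be nonzero; hence $|\phi\rangle\in\mathrm{range}\,\Sigma^{(b)}_{j(k)}$ for the unique $j(k)$ with $\tilde\sigma^{(a)}_{j(k)}=\tilde\rho^{(a)}_k$. This shows $\Pi^{(b)}_k\leq\Sigma^{(b)}_{j(k)}$; the symmetric argument gives the reverse inclusion, and distinctness of the $\tilde\rho^{(a)}$'s forces the partner index to be $k$ again, so $\Pi^{(b)}_k=\Sigma^{(b)}_{j(k)}$ and $\tilde\rho^{(a)}_k=\tilde\sigma^{(a)}_{j(k)}$. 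Since both families of projectors sum to $\openone$, $k\mapsto j(k)$ is a bijection and the decompositions coincide.

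I expect the only place needing genuine care to be the bookkeeping forced by the operator-valued nature of the ``eigenvalues'': every step where the scalar spectral theorem would simply compare numbers must here be closed by the observation that $\sum_j\tilde\sigma^{(a)}_j\otimes O_j=\sum_j\tilde\rho^{(a)}_k\otimes O_j$, with the $O_j$ linearly independent operators on $b$, can only hold if the surviving $\tilde\sigma^{(a)}_j$ all equal $\tilde\rho^{(a)}_k$, which combined with distinctness annihilates all but one term. This is exactly the mechanism of Lemma~\ref{SOVElemma}, applied one rank-one projector at a time, and it drives both the maximality check in the existence part and the heart of uniqueness; everything else is routine manipulation of orthogonal projector sums.
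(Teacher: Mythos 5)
Your proof is correct and follows the same conceptual route as the paper's: existence by merging rank-one SEs with equal SOVEs into full-rank projectors, and uniqueness from the fact that each FRASE is forced to project onto the subspace of vectors sharing a given SOVE. The paper states this very tersely (``a full-rank projector onto a subspace is unique''), whereas you supply the explicit linear-independence and orthogonality bookkeeping — in particular the maximality check in the existence step and the term-by-term matching of two FRASE decompositions — that makes the claim airtight.
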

\begin{proof}
The existence of a FRASE decomposition for any quantum-classical state follows directly from the definition for these states given above (by combining SEs for degenereate SOVEs).  The uniqueness follows immediately from the fact that each FRASE is a full-rank projector onto a subspace, and a full-rank projector onto a sub-space is unique.

\end{proof}
FRASEs satisfy many similar properties to full-rank eigenprojectors, and the standard definition of eigenprojectors is recovered as $b$ is extended to the whole system. The uniqueness of FRASE decomposition underpins the simulation methods in this paper.

Now Lemma~\ref{lemma:UniqueQCDecomposition} above gives rise to the following corollary for concordant states, which provides a useful uniqueness argument which will be needed later on:
\begin{corollary}
\label{corollary:DecompositionConcordantStates}
Any subset $b$ of qudits in a concordant state $\rho=\sum_{\bf x} L \,p\!\left({\bf x}\right) \!\vert{\bf x}\rangle\!\langle {\bf x} \vert L^\dag$, where ${\bf x}=\{x^{(1)},x^{(2)},\cdots\}$ labels the computational basis and $L=L^{(1)}\!\otimes L^{(2)\cdots}$ denotes local-unitary rotations for every qudit, has a unique set of FRASEs $\{\Pi^{(b)}_{k}\}$ which possess a product basis (that is to say the FRASEs are a sum of orthogonal product states).
\end{corollary}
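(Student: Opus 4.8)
The plan is to hand-construct a FRASE decomposition of $\rho$ in which every full-rank subsystem eigenprojector is visibly a sum of orthogonal product states, and then invoke the uniqueness of the FRASE decomposition (Lemma~\ref{lemma:UniqueQCDecomposition}) to transfer this property to \emph{the} set $\{\Pi^{(b)}_{k}\}$. Note first that a concordant state is classical with respect to \emph{every} subset of qudits: grouping the rank-one projectors $\pi^{(j)}_{k_j}$ for $j\in b$ yields a complete set of rank-one projectors on $b$. Hence $\rho$ is quantum-classical across the bipartition $(a,b)$ and Lemma~\ref{lemma:UniqueQCDecomposition} guarantees that a unique FRASE decomposition $\rho=\sum_{k}\tilde{\rho}^{(a)}_{k}\otimes\Pi^{(b)}_{k}$ exists; it remains only to show each $\Pi^{(b)}_{k}$ has a product basis.

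Next I would write the computational-basis label as ${\bf x}=({\bf x}_a,{\bf x}_b)$ along the cut and split $L=L^{(a)}\otimes L^{(b)}$ with $L^{(b)}=\bigotimes_{j\in b}L^{(j)}$, and set $\pi^{(b)}_{{\bf x}_b}=L^{(b)}\vert{\bf x}_b\rangle\langle{\bf x}_b\vert L^{(b)\dag}=\bigotimes_{j\in b}L^{(j)}\vert x^{(j)}\rangle\langle x^{(j)}\vert L^{(j)\dag}$. These are rank-one, mutually orthogonal, sum to $\openone^{(b)}$, and are manifestly products of local rank-one projectors. Substituting the explicit form of $\rho$ and using $\langle{\bf x}'_b\vert{\bf x}_b\rangle=\delta_{{\bf x}'_b,{\bf x}_b}$ shows directly that $\rho(\openone\otimes\pi^{(b)}_{{\bf x}_b})=(\openone\otimes\pi^{(b)}_{{\bf x}_b})\rho=\tilde{\rho}^{(a)}_{{\bf x}_b}\otimes\pi^{(b)}_{{\bf x}_b}$ with SOVE $\tilde{\rho}^{(a)}_{{\bf x}_b}=\sum_{{\bf x}_a}p({\bf x}_a,{\bf x}_b)\,L^{(a)}\vert{\bf x}_a\rangle\langle{\bf x}_a\vert L^{(a)\dag}$, so each $\pi^{(b)}_{{\bf x}_b}$ is a legitimate SE of $\rho$.

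The remaining, and genuinely delicate, step is to show that grouping the $\pi^{(b)}_{{\bf x}_b}$ by the value of their SOVE reproduces exactly the FRASEs $\Pi^{(b)}_{k}$ --- equivalently, that no non-product rank-one eigenprojector could ever ``split'' a FRASE. I would establish this by comparing each $\pi^{(b)}_{{\bf x}_b}$ with each FRASE $\Pi^{(b)}_{j}$ by the same device as in the proof of Lemma~\ref{SOVElemma}: evaluating $(\openone\otimes\Pi^{(b)}_{j})\,\rho\,(\openone\otimes\pi^{(b)}_{{\bf x}_b})$ two ways (first contracting on the right, then on the left, each time using the SE property) gives $\tilde{\rho}^{(a)}_{{\bf x}_b}\otimes\Pi^{(b)}_{j}\pi^{(b)}_{{\bf x}_b}=\tilde{\rho}^{(a)}_{j}\otimes\Pi^{(b)}_{j}\pi^{(b)}_{{\bf x}_b}$, which forces $\Pi^{(b)}_{j}\pi^{(b)}_{{\bf x}_b}=0$ whenever $\tilde{\rho}^{(a)}_{j}\neq\tilde{\rho}^{(a)}_{{\bf x}_b}$ (the case $\tilde{\rho}^{(a)}_{{\bf x}_b}=0$, arising from ${\bf x}_b$ with vanishing marginal, needs no special treatment, as the argument uses only distinctness of the SOVEs). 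Since $\sum_{j}\Pi^{(b)}_{j}=\openone$, each $\pi^{(b)}_{{\bf x}_b}$ must lie in the range of exactly one FRASE --- the one whose SOVE equals $\tilde{\rho}^{(a)}_{{\bf x}_b}$, which exists precisely because otherwise $\pi^{(b)}_{{\bf x}_b}$ would be annihilated by all the $\Pi^{(b)}_{j}$. A short completeness count then closes the argument: $\sum_{{\bf x}_b}\pi^{(b)}_{{\bf x}_b}=\openone=\sum_{k}\Pi^{(b)}_{k}$ with each product projector contained in a single FRASE, so each $\Pi^{(b)}_{k}$ is precisely the sum of the $\pi^{(b)}_{{\bf x}_b}$ assigned to it, and is therefore a sum of orthogonal product states. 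The main obstacle is this last equivalence-of-decompositions step (ruling out ``hidden'' non-product eigenprojectors); everything else is bookkeeping with the explicit form of $\rho$.
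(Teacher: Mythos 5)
Your proposal is correct and follows essentially the same route as the paper's (one-sentence) proof: take the product rank-one SEs $L^{(b)}\vert{\bf x}^{(b)}\rangle\!\langle{\bf x}^{(b)}\vert L^{(b)\dag}$, note they furnish a complete orthogonal set of subsystem eigenprojectors, group them by coincident SOVEs, and appeal to the uniqueness guaranteed by Lemma~\ref{lemma:UniqueQCDecomposition}. You simply make explicit the bookkeeping the paper elides -- in particular, using the Lemma~\ref{SOVElemma}-style two-sided contraction to show that each product SE lies in the range of exactly one FRASE, so the grouping actually reproduces the unique FRASE set -- which is a faithful, fully detailed version of the same argument.
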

\begin{proof}
The restrictions of the components of $\rho$ to the subset of qudits in $b$,
$L^{(b)}\!\left\vert {\bf x^{(b)}}\right\rangle\!\!\left \langle {\bf x^{(b)}} \right\vert \! L^{(b)\dag}$,
provide an orthogonal set of subsystem eigenprojectors from which a unique set of FRASEs can be constructed following Lemma~\ref{lemma:UniqueQCDecomposition}.
\end{proof}
It is important to note that the local basis for $b$ itself may not be unique, as is the case for example when the subsystem-eigenprojector decomposition yields one FRASE which is the maximally-mixed state.

\subsection{Monte-Carlo simulation of concordant computation}
\label{Sec:MonteCarloForConcordantComputation}

Now we formalize the central challenge tackled in this paper and Ref.~\cite{Eastin10}.  Our notation is as follows: We are given unitary circuit $C$, consisting of unitary gates $G_{t}$ for the $t^{\it th}$ time step, on a system of $N$ qudits (with arbitrary dimensions).  The partially-completed unitary after the $t^{\rm th}$ step is $U_{t}=\mathcal{T}\Pi _{k=1}^{t}G_{k}$ (where $\mathcal{T}$ denotes that the product respects the temporal ordering of the unitaries).  We are also given $\rho_0$ which must admit a polynomially-sized description, and the promise that the state of the system at every step, $\rho_t=U_t \rho_0 U^\dag_t$ (where $\rho_{0}$ is of the form
$\rho_t= \sum_{\bf x} p_t({\bf x}) L_{t}\vert {\bf x} \rangle\!\langle {\bf x} \vert L_{t}^\dag$
 and where $L_0$ and $p_0$ only are given as part of the specification of the problem).  Then the overarching goal can be stated as: Find an equivalent circuit $C^\prime$, consisting of unitary gates $G_{t}^\prime$ with partial completion of the circuit $U_{t}^{\prime }=\mathcal{T}\Pi _{k=1}^{t}G_{k}^\prime$, such that $U_{t}\rho _{0}U_{t}^{\dag }=U_{t}^{\prime }\rho _{0}U_{t}^{\prime \dag }$, but for which there are corresponding pure-state trajectories which are known to be efficiently simulable.

Using the results of Sec.~\ref{Sec:SubsystemEigenprojectorDecomposition}, we can now proceed to derive the general form of a circuit suitable for simulating $C$.  At time step $t$, $G_t$ defines a bipartition of the system qudits into its support $b$ and the rest $a$.  The system state after $t-1$ has form
\begin{equation}
\label{eq:degeneracydiagnosed}
\rho_{t-1} = \sum_k \tilde{\rho}^{(a)}_k\otimes \Pi^{(b)}_k
\end{equation}
where
$\tilde{\rho}_k^{(a)}= \textrm{Tr}_{b} \left[\left(\openone^{(a)} \otimes \Pi^{(b)}_{k}\right) \rho_{t-1}\right] \,/\, \textrm{Tr}_{b}\left(\Pi^{(b)}_{k}\right)$
and
$\left\{\Pi^{(b)}_k\right\}$ is the unique set of FRASEs following Corollary~\ref{corollary:DecompositionConcordantStates}, which are sums of orthogonal product states $L^{(b)}_{t-1}\left\vert {\bf x^{(b)}}\right\rangle\!\!\left \langle {\bf x^{(b)}}\right\vert L^{(b)\dag}_{t-1}$.   Then, at the end of time step $t$
\begin{eqnarray}
\rho _{t}&=& G_t \rho_{t-1} G_t^\dag \nonumber \\
         &=&
\sum_{k}\tilde{\rho}^{(a)}_{k}\otimes G_{t} \Pi^{(b)}_{k} G_{t}^{\dag }.
\end{eqnarray}
By inspection, the operators $G_{t} \Pi^{(b)}_{k} G_{t}^{\dag }$ are subsystem eigenprojectors of $\rho_t$ for the same bipartition. Since the $\Pi^{(b)}_k$ are FRASEs for $\rho_{t-1}$, and $G_t$ is unitary, it follows that the $G_t \Pi^{(b)}_k G^{t\dag}$ are FRASEs for $\rho_t$.  Furthermore since $\rho_t$ is concordant, the operators $G_t \Pi^{(b)}_k G^{t\dag}$ must be a sum of orthogonal product states $L^{(b)}_{t}\left\vert {\bf x^{(b)}}\right\rangle\!\!\left \langle {\bf x^{(b)}}\right\vert L^{(b)\dag}_{t}$ by Corollary~\ref{corollary:DecompositionConcordantStates}.  The uniqueness property of FRASEs now gives us the following key equation:
\begin{equation}
\label{eq:GPGisLDLPLDL}
G_t \Pi^{(b)}_k G_t^{\dag }
=
L_{t}^{(b)} D_{t} L_{t-1}^{(b)\dag} \Pi^{(b)}_k
L_{t-1}^{(b)} D_{t}^\dag L_{t}^{(b)\dag }
\,\,\,\,\,\forall k,
\end{equation}
where $D_t$ accounts for the possibility of a permutation of the computational-basis states on $b$.
(Note that there is typically freedom in choices for $D_t$.)

For the system state we have the equivalence,
\begin{eqnarray}
\label{eq:singlestepupdate}
\rho_t
&=&G_t \rho_{t-1} G^{\dag}_t \nonumber \\
&=&L^{(b)}_t D_t L^{(b)\dag}_{t-1} \rho_{t-1} L^{(b)}_{t-1} D_t^\dag L^{(b)\dag}_t.
\end{eqnarray}
From this equation it should be noted that $L_{t}$ agrees with $L_{t-1}$ other than (possibly) on the $b$,
and that $p_{t}\left({\bf x}\right) =p_{t-1}\left(\{ {\bf y},D_t^{-1}({\bf z})\}\right)$
(where ${\bf z}$ is the part of ${\bf x}$ in $b$, ${\bf y}$ is the part of ${\bf x}$ not in $b$ and we write $D_t\vert{\bf{z}}\rangle=\vert{D_t(\bf{z})}\rangle$).
For $C$ as a whole we have,
\begin{eqnarray}
\label{eq:equivalentcircuit}
\rho_t
&=& U_t \rho_0 U_t^\dag \nonumber \\
&=&
\{\mathcal{T}\Pi_{k=1}^t (L_k D_k L^{\dag}_{k-1})\}
\rho_0
\{\mathcal{T}\Pi_{k=1}^t (L_k D_k L^{\dag}_{k-1})\}
^\dag \nonumber \\
&=&
L_t
\{\mathcal{T}\Pi_{k=1}^t  D_k \}
L^{\dag}_0
\rho_0
L_0
\{\mathcal{T}\Pi_{k=1}^t  D_k \}^\dag
L^{\dag}_{t}.
\end{eqnarray}

To find a complete simulation algorithm for concordant computation, the challenge now is to find the gates $L_k$ and $D_k$ (for all time steps) which make up $C^\prime$, given the initial state and $C$. We will return to this challenge shortly however, and consider how the output statistics would be simulated supposing, for argument's sake, that $C^\prime$ has already been found.  First, we observe that a simulation algorithm does not need to compute an explicit description of the full state at every time step, but can instead just record changes to the system state using an update rule in keeping with Eq.~(\ref{eq:singlestepupdate}). A suitable update rule is: (i) record a new local basis, for every qudit in the support of the gate which acts, specified by a complete set of rank-1 projectors (which need not be unique); (ii) record a suitable permutation operator which acts on the support of the gate.  Given such an update a rule, the output statistics of $C'$ would be sampled as follows:
\begin{remark}
Monte-Carlo simulation, using a pre-calculated update rule, of the output statistics for a given concordant computation with initial state
$\rho_0 = \bigotimes_{j=1}^{N}
L^{(j)}_0 \!
\left(
p^{(j)}_{0}\!(0) \vert 0\rangle\!\langle 0 \vert
\!+\!
p^{(j)}_{0}\! (1) \vert 1\rangle\!\langle 1 \vert
\!+\!
\cdots
\right)
\!
L^{(j)\dag}_0
$:
\begin{itemize}
\item{Define start of a stochastic trajectory, (in the computational basis), by randomly picking a $N$-digit qudit string $s_{\rm in}$ according to probability distributions $p^{(1)}_0(\cdot),\cdots,p^{(N)}_0(\cdot)$.}

\item Permute $s_{\rm in}\mapsto s_{\rm out}$ using $\mathcal{T}\Pi_{k=1}^{t_f} D_k$ given by the update rule.

\item Sample probabilities
$\textrm{Tr} \!\left( \vert b^{(m_j)}\rangle\!\langle b^{(m_j)}\vert L^{(m_j)}_t \vert s_{\rm out}^{(m_j)}\rangle\!\langle s_{\rm out}^{(m_j)}\vert L^{(m_j)\dag}\right)$
on specified qudits $m_j$ with measurement bases $\{\vert b^{(m_j)} \rangle\}$. This is equivalent to stochastically flipping some of the bits of $s_{\rm out}$ with probabilities defined by the measurement basis.

\item Repeat, and gather statistics.
\end{itemize}
\end{remark}
Note that the local-basis changes at intermediate steps are not required to define the trajectories---only the basis of the final measurements.  However, the
intermediate local-basis projectors do play an essential role elsewhere for deriving the update rule from the initial specification of the concordant computation (as tackled in Sec.~\ref{Sec:GateStructureAndLBF}).

The simulation algorithm described in Ref.~\cite{Eastin10} outlines a procedure for finding $C^\prime$ from the specification of a concordant computation. For each time step $t$, the simulation algorithm works in three stages: The first stage is equivalent to finding the decomposition for $\rho_{t-1}$ in Eq.~(\ref{eq:degeneracydiagnosed}) (termed ``diagnosing the degeneracy'' in the reference), and is done by testing for the (permutation) symmetries of $L_0^\dag \rho_0 L_0$ on the support of the partially-completed circuit $U_t$.  The second and third stages were not described in detail in the reference, but are loosely equivalent to solving our Eq.~(\ref{eq:GPGisLDLPLDL}) for $L_t^{(b)}$ and then $D_t$.  The first stage however constitutes an NP-hard problem in general (as explained in Ref.~\cite{Eastin10}), which undermines the success of the simulation algorithm.

One way around this is to place a restriction on $C$, so that $\rho_0$ only needs to be tested for a limited group of symmetries.  Ref.~\cite{Eastin10} made the stipulation that each $G_k$ should have support on one or two qubits only, as in this case the test can be limited to the set of classical-reversible gates which are {\it linear} \cite{NielsenChuang,PreskillLectureNotes}.
Ref.~\cite{Eastin10} includes an efficient symmetry test for this case.  Eastin's proof of efficiency is not written using standard quantum information techniques. To aid the reader, therefore, we present an alternative formulation of this result here. We  present a theorem and corollary, of which the latter is equivalent to Lemma 4 in \cite{Eastin10}.

As we show below, the efficiency of Eastin's test can be attributed to the fact that one- and two-qubit reversible classical gates (CNOT gates, NOT gates, and combinations) are in the Clifford group. After stating the more general Theorem~\ref{theorem:EfficientCliffordSymmetryTesting} we then derive Corollary~\ref{corollary:EfficientSimulationForOneTwoQubitGates},  equivalent to  Lemma 4 in \cite{Eastin10}.

\begin{theorem}
\label{theorem:EfficientCliffordSymmetryTesting}
Suppose that $L_0^\dag\rho_0 L_0=\bigotimes_{j=1}^N\left( L_{0}^{\left( j\right) \dag }\rho _0^{\left( j\right) }L_0^{\left(j\right)}\right)$ as above, which is factorised and diagonal in the computational basis, and that $S_\sigma$ is Clifford unitary on the $N$ qubits. Then: $S_\sigma L_0^\dag \rho _0 L_0 S_\sigma^\dag=L_0^\dag\rho_0 L_0$ if and only if $\textrm{Tr}\left[ \left( S_\sigma L_0^\dag\rho _0 L_0 S_\sigma^\dag - L_0^\dag\rho_0 L_0\right) Z^{\left( j\right) }\otimes \openone^{\left( \backslash j\right) }\right] =0\,\,\,\forall j$. The expectation values $\textrm{Tr}\left[ \left( S_\sigma L_{0}^\dag \rho _0 L_0 S_\sigma^\dag\right) Z^{\left( j\right)}\otimes \openone^{\left( \backslash j\right) }\right] $ and $\textrm{Tr}\left[ \left( L_0^\dag\rho_0 L_0 \right) Z^{\left( j\right) }\otimes \openone^{\left( \backslash j\right) }\right] $ can be computed efficiently, and the overall computational complexity for evaluating all the required expectation values scales quadratically with $N$.
\end{theorem}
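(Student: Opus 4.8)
\section*{Proof proposal for Theorem~\ref{theorem:EfficientCliffordSymmetryTesting}}

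The plan is to work in the Pauli basis and to exploit the rigidity of $\tau:=L_0^\dagger\rho_0 L_0$, which, being a tensor product of single-qubit operators diagonal in the computational basis, decomposes as $\tau=2^{-N}\sum_{\mathbf v}r^{\mathbf v}\,Z^{\mathbf v}$, a sum over the \emph{diagonal} Paulis $Z^{\mathbf v}=\bigotimes_j(Z^{(j)})^{v_j}$ whose coefficients $r^{\mathbf v}=\prod_{j:\,v_j=1}r_j$ are \emph{multiplicative} over the support of $\mathbf v$, with $r_j=\textrm{Tr}[\tau\,Z^{(j)}\otimes\openone^{(\backslash j)}]=p^{(j)}_0(0)-p^{(j)}_0(1)$. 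Because $S_\sigma$ is Clifford, conjugation permutes the Pauli group up to sign and is multiplicative, so $S_\sigma\tau S_\sigma^\dagger=2^{-N}\prod_j(\openone+r_j Q_j)$ with $Q_j:=S_\sigma Z^{(j)}S_\sigma^\dagger$ a family of mutually commuting signed Paulis; since Clifford conjugation is a group automorphism the $Q_j$ are linearly independent, so the Paulis $\prod_{j\in S}Q_j$ are pairwise distinct and this expansion has no cancellation. Two operators agree iff all their Hilbert--Schmidt overlaps with Paulis agree, and $\textrm{Tr}[A\,Z^{(j)}\otimes\openone^{(\backslash j)}]$ is (up to the factor $2^N$) the coefficient of $Z^{(j)}$ in $A$; hence the ``only if'' direction is immediate.

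For ``if'', I would first extract structure from the hypothesis. Relabel the qubits so that $r_1,\dots,r_m\neq0$ and $r_{m+1}=\dots=r_N=0$. Since only subsets $S\subseteq\{1,\dots,m\}$ contribute terms to $S_\sigma\tau S_\sigma^\dagger$, equality of the weight-one coefficients forces each $Z^{(j)}$ with $j\le m$ to lie in the group generated by $Q_1,\dots,Q_m$; independence then makes $\langle Q_1,\dots,Q_m\rangle=\langle Z^{(1)},\dots,Z^{(m)}\rangle$, so every $Q_j$ ($j\le m$) is a signed diagonal Pauli $\mu_j Z^{\mathbf w_j}$ supported inside $\{1,\dots,m\}$, related to the $Z^{(j)}$ by an invertible $\mathbb F_2$-matrix $W$, and the weight-one conditions become multiplicative relations $\mu_{j_0}r_{j_0}=\prod_{j\in C_{j_0}}r_j$. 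In particular $S_\sigma\tau S_\sigma^\dagger$ is then diagonal and agrees with $\tau$ on all single-qubit marginals, and (after discarding the trivial tensor factors on qubits $m{+}1,\dots,N$) the claim reduces to: the commuting product $\prod_{j\le m}(\openone+\mu_j r_j Z^{\mathbf w_j})$ equals $\prod_{j\le m}(\openone+r_j Z^{(j)})$ whenever their weight-one Pauli coefficients coincide.

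The heart of the argument---and the step I expect to be the main obstacle---is this propagation from weight one to all weights. Taking logarithms of the magnitude part of the weight-one relations turns them into a linear fixed-point equation $\bar W^{\mathrm T}\boldsymbol\ell=\boldsymbol\ell$ over $\mathbb R$, with $\bar W$ a $\{0,1\}$-lift of the $\mathbb F_2$-matrix relating the two Pauli families and $\ell_j:=\log|r_j|$ (finite on the non-trivial qubits), while the desired higher-weight identities are exactly $\boldsymbol\ell\cdot\lfloor\bar W\mathbf v/2\rfloor=0$ for all $\mathbf v\in\{0,1\}^N$. This is not a formal identity; rather, whenever a ``carry'' coordinate of $\lfloor\bar W\mathbf v/2\rfloor$ is nonzero the fixed-point equation has already pinned the relevant $\ell_j$ to $0$, i.e.\ $r_j=\pm1$ (a degeneracy of $\tau$), and in that degenerate regime the residual equality can be verified directly. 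I would formalise this by induction on the Hamming weight of $\mathbf v$: peel off one coordinate, use the relations $\boldsymbol\ell\cdot\mathbf w_j=\ell_j$ coming from the fixed-point equation together with a parallel analysis of the $\pm1$ sign factors carried by the $\mu_j$, and show the cross term that appears is forced to vanish. Tracking which $r_j$ are pinned to $\pm1$ is the delicate bookkeeping.

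Finally, for the complexity claim the key point is that no exponentially large operator is ever constructed. Each $\textrm{Tr}[\tau\,Z^{(j)}\otimes\openone^{(\backslash j)}]=p^{(j)}_0(0)-p^{(j)}_0(1)$ is read straight off the single-qubit data, so these $N$ values cost $O(N)$ in total. For $\textrm{Tr}[(S_\sigma\tau S_\sigma^\dagger)Z^{(j)}\otimes\openone^{(\backslash j)}]=\textrm{Tr}[\tau\,(S_\sigma^\dagger Z^{(j)}S_\sigma)]$ I would build the stabilizer tableau of $S_\sigma$ once---$O(N^2)$ work via the standard Gottesman--Knill conjugation update---so that each conjugated Pauli $S_\sigma^\dagger Z^{(j)}S_\sigma=\pm\bigotimes_k P^{(j)}_k$ is read off a column; the trace then factorises as $\pm\prod_k\textrm{Tr}[\tau^{(k)}P^{(j)}_k]$, which vanishes unless every $P^{(j)}_k\in\{I,Z\}$ and otherwise equals $\pm\prod_{k:\,P^{(j)}_k=Z}r_k$, an $O(N)$-term product. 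Summing over the $N$ choices of $j$ gives $O(N^2)$ arithmetic operations overall, hence the claimed quadratic scaling; the overhead of performing each operation exactly is $N$-independent and is treated separately.
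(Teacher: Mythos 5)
Your route is genuinely different from the paper's. The paper first establishes equality on the pure qubits, then orders the remaining distinct single-qubit coefficients $Q_1>Q_2>\cdots$ strictly inside $(-1,1)$ and peels off from the top: the trace condition for qubits in $J(Q_1)$ is argued to force $S_\sigma Z^{(j)}S_\sigma^\dagger = Z^{(k)}$ with $q_k=Q_1$ because $Q_1$ is extremal, and then one inducts down the list of values (dismissing coincidences where, say, $Q_2$ equals a power of $Q_1$ because the relevant Pauli strings are already spoken for). Your argument instead quotients out the maximally-mixed qubits, forces the group generated by the conjugated $Q_j$ to coincide with the $Z$-group on the nontrivial qubits, and packages the coefficient constraints into an $\mathbb{F}_2$-matrix $\tilde W$ plus a logarithmic fixed-point equation. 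Making the $\mathbb{F}_2$ linear-independence structure explicit is actually an advantage: it surfaces a constraint that the paper's maximality argument uses only implicitly (without it, $S_\sigma^\dagger Z^{(j)}S_\sigma$ landing on a weight-two $Z$-string whose coefficient happens to equal $Q_1$ is not excluded by the trace condition alone).

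However, you have correctly identified, and then left unfilled, a genuine gap at the heart of the hard direction. The assertion that ``whenever a carry coordinate of $\lfloor\bar W\mathbf v/2\rfloor$ is nonzero the fixed-point equation has already pinned the relevant $\ell_j$ to $0$'' is stated as an expectation, not proved, and the parallel sign analysis for the $\mu_j$ factors is not begun. You write ``Tracking which $r_j$ are pinned to $\pm1$ is the delicate bookkeeping'' and ``I would formalise this by induction'' — that is a plan, and the theorem is not established until it is carried out. The missing argument is in fact short: since each $\ell_k=\log|r_k|\le0$ componentwise and $\tilde W$ is an invertible $\{0,1\}$-matrix, summing the components of $\tilde W^{\mathrm T}\boldsymbol\ell=\boldsymbol\ell$ gives $\sum_k \ell_k(c_k-1)=0$ with $c_k\ge1$ the number of ones in row $k$; every summand is nonpositive, so for each $k$ either $\ell_k=0$ or $c_k=1$, and a carry in coordinate $k$ requires $c_k\ge2$. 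With that in place the rows of $\tilde W$ on the strictly mixed qubits are unit vectors, the weight-one equalities read $r_{\pi(j)}=\mu_j r_j$ for a permutation $\pi$, and $\prod_j(\openone+r_jQ_j)=\prod_j(\openone+r_jZ^{(j)})$ becomes a relabelling, which also disposes of the signs. You should supply this (or an equivalent) argument rather than deferring it. The complexity analysis via the stabiliser tableau is correct and matches the paper's.
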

\begin{proof}
See Appendix~\ref{Sec:AppendixForTwo}.
\end{proof}

\begin{corollary}
\label{corollary:EfficientSimulationForOneTwoQubitGates}
FRASES can be computed efficiently for every step in a concordant computation on $N$ qubits, for which the circuit $C$ is composed entirely of one and two-qubit gates, and $L_0^\dag\rho _0 L_0$ is factorised and diagonal in the computational basis.
\end{corollary}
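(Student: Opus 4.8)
The plan is to show that at each time step $t$ the computation of the FRASE decomposition of $\rho_{t-1}$ for the bipartition $(a,b)$, with $b$ the support of $G_t$, reduces to a bounded number of \emph{Clifford} symmetry tests of $L_0^\dag\rho_0 L_0$, each of which is efficient by Theorem~\ref{theorem:EfficientCliffordSymmetryTesting}. I would proceed by induction on $t$: supposing the algorithm has already processed steps $1,\dots,t-1$ and so holds the local unitaries $L_{t-1}$ and the basis permutations $D_1,\dots,D_{t-1}$ of the equivalent circuit, set $V_{t-1}=\mathcal{T}\Pi_{k=1}^{t-1}D_k$, so that Eq.~(\ref{eq:equivalentcircuit}) gives $\rho_{t-1}=L_{t-1}V_{t-1}\,(L_0^\dag\rho_0 L_0)\,V_{t-1}^\dag L_{t-1}^\dag$. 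By Corollary~\ref{corollary:DecompositionConcordantStates} the rank-one subsystem eigenprojectors of $\rho_{t-1}$ on $b$ are the (at most four) projectors $L^{(b)}_{t-1}\vert{\bf x}^{(b)}\rangle\!\langle{\bf x}^{(b)}\vert L^{(b)\dag}_{t-1}$, and two of these belong to the same FRASE exactly when their SOVEs agree; since a FRASE is the maximal block of rank-one subsystem eigenprojectors sharing a SOVE, this happens iff the transposition $\tau$ exchanging the two labels on $b$, conjugated into the local basis, is a symmetry of $\rho_{t-1}$. As $|b|\le 2$ there are at most $\binom{4}{2}=6$ such transpositions, so $O(1)$ symmetry tests per step pin down the entire FRASE partition.

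Next I would pull each test back to the initial state. Because $L_{t-1}=\bigotimes_j L^{(j)}_{t-1}$ factors over qudits, the local unitaries cancel and $L^{(b)}_{t-1}\tau L^{(b)\dag}_{t-1}\otimes\openone^{(a)}$ is a symmetry of $\rho_{t-1}$ if and only if $\tilde S_\tau:=V_{t-1}^\dag\,(\tau\otimes\openone^{(a)})\,V_{t-1}$ is a symmetry of $L_0^\dag\rho_0 L_0$. The crucial observation is then that $\tilde S_\tau$ is a Clifford unitary: each $D_k$ permutes the two or four computational-basis states of the one or two qubits supporting $G_k$, and every permutation of one or two bits is affine over $\mathbb{F}_2$ --- for two bits because $AGL(2,\mathbb{F}_2)\cong S_4$ --- hence a product of NOT and CNOT gates, hence Clifford; the same holds for $\tau$; so $\tilde S_\tau$, a conjugate of a Clifford unitary by the Clifford circuit $V_{t-1}$, is Clifford, and its stabilizer tableau can be computed (or maintained incrementally as the $D_k$ are produced) in time polynomial in $N$ and $t$. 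Theorem~\ref{theorem:EfficientCliffordSymmetryTesting} then settles each test by checking $\mathrm{Tr}\!\left[\left(\tilde S_\tau L_0^\dag\rho_0 L_0 \tilde S_\tau^\dag - L_0^\dag\rho_0 L_0\right)Z^{(j)}\otimes\openone^{(\backslash j)}\right]=0$ for all $j$, at cost $O(N^2)$ arithmetic operations. With $O(1)$ tests per step over $t_f$ steps, and since the remaining per-step work of solving Eq.~(\ref{eq:GPGisLDLPLDL}) for $L^{(b)}_t$ and $D_t$ involves only the at most two qubits of $b$ and so carries no dependence on $N$ or $t_f$, the FRASE computation is polynomial overall.

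I expect the main obstacle to lie in the two structural ingredients of the reduction. The first is the claim that only transpositions on $b$ need to be tested; this rests on characterising FRASEs as maximal collections of rank-one subsystem eigenprojectors with a common SOVE (Lemma~\ref{lemma:UniqueQCDecomposition} together with the SOVE/SE formalism), so that ``same FRASE'' is exactly the relation ``the exchanging transposition leaves $\rho_{t-1}$ invariant'', which is automatically an equivalence relation and hence defines a partition. The second is the fact that every one- and two-qubit classical reversible gate is Clifford --- valid through the exceptional isomorphism $AGL(2,\mathbb{F}_2)\cong S_4$, and exactly the point where the restriction to one- and two-qubit gates is indispensable, since the natural three-qubit analogue TOFFOLI is not Clifford and the argument breaks. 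A lesser but necessary check is that the pull-back genuinely removes \emph{all} dependence on the (generically non-Clifford, possibly irrational) local unitaries $L_{t-1}$, leaving a purely Clifford operator $\tilde S_\tau$ to which Theorem~\ref{theorem:EfficientCliffordSymmetryTesting}, which requires the reference state $L_0^\dag\rho_0 L_0$ to be factorised and diagonal, applies without modification.
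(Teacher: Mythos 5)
Your proposal is correct and follows essentially the same route as the paper's proof: reduce the FRASE computation on the support $b$ to a bounded number of symmetry tests of $\rho_{t-1}$, pull each test back through the accumulated classical-reversible permutations $V_{t-1}=D_{t-1}\cdots D_1$ to a Clifford symmetry test of $L_0^\dag\rho_0 L_0$, and invoke Theorem~\ref{theorem:EfficientCliffordSymmetryTesting}. Your specialization from ``all reversible gates on $b$'' to the $\binom{4}{2}$ transpositions (using the fact that ``same SOVE'' is an equivalence relation) and the explicit appeal to $AGL(2,\mathbb{F}_2)\cong S_4$ are modest streamlinings of the paper's statement that one- and two-qubit reversible gates are generated by NOT and CNOT, but the structure of the argument is identical.
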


\begin{proof}
At time step $t$, it is necessary to find the FRASES for $\rho_{t-1}$ on the support $b$ of $G_t$, and is equivalent to finding the full symmetry group of $\rho _{t-1}$ on $b$.  The promise of concordant computation implies that $\rho _{t-1}=\sum \tilde{\rho}_{k}^{\left( a\right) }\otimes \Pi _{k}^{\left( b\right)}$ where the $\Pi _k^{\left( b\right)}$ are FRASEs, and it is required to find the projectors $L_{t-1}^\dag \Pi _k^{\left( b\right)} L_{t-1}$ in the computational basis ($L_{t-1}$ is known from the previous time step).  This can be done by finding all classical reversible gates $P$ on $b$ satisfying,
$\left( D_{t-1}\cdots D_1\right) ^\dag P\left( D_{t-1}\cdots D_1\right) \left( L_0^\dag \rho _0 L_0\right) \left( D_1\cdots D_{t-1}\right)^{\dag }P\left( D_1\cdots D_{t-1}\right) = L_0^\dag\rho _0 L_0$.
Under the restriction to one and two-qubit gates, all of the classical reversible gates $D_1,\cdots,D_t$ and $P$ are Clifford gates (since they can be generated using only NOT and CNOT gates).  Theorem~\ref{theorem:EfficientCliffordSymmetryTesting} can therefore be applied here, and it  guarantees that all necessary symmetry tests can be performed efficiently.
\end{proof}

\section{New simulation algorithm for concordant computation with gates of arbitrary size}
\label{Sec:GateStructureAndLBF}

The aim of this section is to develop an algorithm which can be used to simulate concordant computations which involve gates of arbitrary size.  We continue using the notation introduced in Sec.~\ref{Sec:MonteCarloForConcordantComputation} for states, gates and circuits.  To recap from Ref.~\cite{Eastin10}, the algorithm therein attempts to identity a FRASE decomposition at every time step (for the initial state $\rho_0$).  As stated above, deriving the FRASE decomposition by considering the whole history of the computation and the symmetries of the input state is NP-hard, equivalent to testing satisfiability for an arbitrary Boolean function.

	The NP-hardness is avoided in Ref.~\cite{Eastin10} by restricting the simulation to concordant circuits composed of gates with support on only one or two qubits. Reversible one- and two-qubit gates are all in the Clifford group, and the efficiency of this algorithm is stated as Corollory \ref{corollary:EfficientSimulationForOneTwoQubitGates}. Here we develop an alternative approach.
	
	The requirement of concordance places strong conditions on every quantum gate. In particular, the FRASE decomposition for the state after a gate $G$ can often be derived from the properties of $G$ alone. Here we develop an  algorithm which exploits this. The algorithm produces output equivalent to the one in Ref.~\cite{Eastin10}, namely a sequence of permutation gates $D_t$ and local-projector changes $L^{(b)}_{t-1} \vert {\bf x}^{(b)} \rangle \! \langle {\bf x}^{(b)} \vert L^{(b)\dag}_{t-1} \mapsto L^{(b)}_t \vert {\bf x}^{(b)} \rangle \! \langle {\bf x}^{(b)} \vert L^{(b)\dag}_{t}$, which are used to construct Monte-Carlo trajectories as described in Sec.~\ref{Sec:MonteCarloForConcordantComputation}.

In Sec.~\ref{Sec:ExplanationOfLBF}, we first derive the general structure of any gate which satisfies the promise of having concordant states for the input and output, and then we will present a general method for solving for its projectors $\{ L^{(b)}_t\vert {\bf x}^{(b)} \rangle\! \langle {\bf x}^{(b)} \vert L^{(b)\dag}_t \}$ and $D_t$ from the given unitary $G_t$ (and previously derived $\{ L_{t-1}\vert {\bf x}^{(b)}\rangle \! \langle {\bf x}^{(b)} \vert L_{t-1}^\dag \}$).  In Sec.~\ref{Sec:ExactnessAndEfficiency}, we explain how each of the steps in the LBF can be implemented using exact arithmetic, and review the computational resources required.

\subsection{Local-basis-update equations and method of solution}
\label{Sec:ExplanationOfLBF}

We start with a technical remark necessary to characterize all gates occurring in concordant computations:
\begin{remark}
Suppose that $\mathcal{B}$ is a partitioning of the labels of the computational-basis states for qudits in $b$.  A unitary $B$ is block diagonal with respect to the partitioning $\mathcal{B}$ of the computational basis of $b$, which is to say that the matrix representation is block-diagonal up to (identical) reordering of the rows and columns, if and only if $B$ commutes with all projectors
$X_j =\sum_{{\bf x}\in \beta_j}\vert {\bf x} \rangle\!\langle {\bf x} \vert$
with
$\beta_j \in \mathcal{B}$.
\end{remark}
\begin{proof}
\begin{equation*}
\langle {\bf x} \vert \left [ B , X_j \right ] \vert {\bf x^\prime} \rangle
=
\begin{cases}
0 \,\,\mbox{if}\,\, {\bf x}, {\bf x}^\prime \in \beta_j, & \\
0 \,\,\mbox{if}\,\, {\bf x}, {\bf x}^\prime \notin \beta_j, & \\
\langle {\bf x} \vert B \vert {\bf x}^\prime \rangle \,\,\mbox{if}\,\, {\bf x} \notin \beta_j, {\bf x}^\prime \in \beta_{j},  & \\
-\langle {\bf x} \vert B \vert {\bf x}^\prime \rangle \,\,\mbox{if}\,\, {\bf x}\in \beta_{j}, {\bf x}^\prime \notin \beta_j.
\end{cases}
\end{equation*}
\end{proof}
Following the line of argument in Sec.~\ref{Sec:MonteCarloForConcordantComputation}, every gate specified in a given concordant computation can be decomposed as follows:
\begin{lemma}
\label{lemma:LDBLGateDecomposition}
Each gate $G_{t}$ specified for concordant circuit $C$ at time step $t$ can be decomposed as
\begin{equation}
\label{eq:gatedecomposition}
G_{t}=L_{t}^{(b)} D_t B_t L_{t-1}^{(b)\dag}
\end{equation}
where $L_{t-1}^{(b)}$ is the local unitary on the support $b$ of $G_t$ after all previous time steps, $D_{t}$ is a classical-reversible gate, and $B_{t}$ is block diagonal, being a direct sum of components which act identically on the projectors $X_k=L_{t-1}^{(b)\dag} \Pi^{(b)}_k L_{t-1}^{(b)}$, where $\{\Pi^{(b)}_k\}$ is the set of FRASEs for $\rho_{t-1}$.
\end{lemma}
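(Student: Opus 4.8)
The plan is to reverse-engineer the decomposition from Eq.~(\ref{eq:GPGisLDLPLDL}), which already tells us how $G_t$ acts on each FRASE of $\rho_{t-1}$, and then to show that the "leftover freedom" in $G_t$ (the part not pinned down by its action on the FRASEs) is exactly a block-diagonal unitary.

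First I would recall the setup of Sec.~\ref{Sec:MonteCarloForConcordantComputation}: at time step $t$, the gate $G_t$ defines a bipartition into its support $b$ and the rest $a$, the state $\rho_{t-1}$ has the FRASE decomposition $\rho_{t-1}=\sum_k \tilde\rho_k^{(a)}\otimes\Pi_k^{(b)}$ with $\{\Pi_k^{(b)}\}$ the unique FRASEs guaranteed by Corollary~\ref{corollary:DecompositionConcordantStates}, and each $\Pi_k^{(b)}$ equals $L_{t-1}^{(b)}X_k L_{t-1}^{(b)\dag}$ where $X_k=\sum_{{\bf x}\in\beta_k}\vert{\bf x}\rangle\!\langle{\bf x}\vert$ is a computational-basis projector for some block $\beta_k$. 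The partition $\mathcal{B}=\{\beta_k\}$ of the computational-basis labels of $b$ is thus fixed by $\rho_{t-1}$. Since $\rho_t=G_t\rho_{t-1}G_t^\dag$ is concordant, Eq.~(\ref{eq:GPGisLDLPLDL}) applies and yields, for every $k$, $G_t\Pi_k^{(b)}G_t^\dag = L_t^{(b)}D_t L_{t-1}^{(b)\dag}\Pi_k^{(b)}L_{t-1}^{(b)}D_t^\dag L_t^{(b)\dag}$, i.e.\ $G_t\Pi_k^{(b)}G_t^\dag = L_t^{(b)}D_t X_k D_t^\dag L_t^{(b)\dag}$.

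Next I would define the candidate block-diagonal unitary $B_t := D_t^\dag L_t^{(b)\dag}G_t L_{t-1}^{(b)}$, so that rearranging gives exactly Eq.~(\ref{eq:gatedecomposition}), $G_t = L_t^{(b)}D_t B_t L_{t-1}^{(b)\dag}$, by construction. It remains to verify that $B_t$ is block diagonal with respect to $\mathcal{B}$. Conjugating the identity from the previous paragraph by $D_t^\dag L_t^{(b)\dag}(\cdot)L_t^{(b)}D_t$ and substituting $L_{t-1}^{(b)}X_k L_{t-1}^{(b)\dag}$ for $\Pi_k^{(b)}$ gives $B_t X_k B_t^\dag = X_k$ for every $k$ — that is, $B_t$ commutes with each projector $X_k$. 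By the block-diagonality remark immediately preceding the lemma, commuting with all the $X_k$ (with $\beta_k\in\mathcal{B}$) is equivalent to $B_t$ being block diagonal with respect to $\mathcal{B}$, i.e.\ a direct sum of components, one per block $\beta_k$; and by construction each such component acts identically on the corresponding $X_k$ (since $B_t X_k B_t^\dag = X_k$). This is precisely the assertion of the lemma, with the $X_k = L_{t-1}^{(b)\dag}\Pi_k^{(b)}L_{t-1}^{(b)}$ as stated.

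The only genuine subtlety — the place I would be most careful — is that $D_t$ is not uniquely determined: the parenthetical remark after Eq.~(\ref{eq:GPGisLDLPLDL}) flags that there is freedom in choosing $D_t$, and likewise $L_t^{(b)}$ is only unique up to the residual basis freedom noted after Corollary~\ref{corollary:DecompositionConcordantStates} (e.g.\ when a FRASE is the maximally mixed state). So the lemma should be read as: \emph{some} valid choice of $(L_t^{(b)},D_t)$ compatible with Eq.~(\ref{eq:GPGisLDLPLDL}) yields a block-diagonal $B_t$; I would make sure the argument fixes one such choice first (any solution of Eq.~(\ref{eq:GPGisLDLPLDL}) will do) and then derives $B_t$ from it, rather than claiming the decomposition holds for all choices simultaneously. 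Beyond that bookkeeping, the proof is essentially just the chain of substitutions above plus an invocation of the block-diagonality remark, so no lengthy computation is needed.
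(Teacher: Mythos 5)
Your proposal is correct and takes essentially the same route as the paper: you define $B_t = D_t^\dag L_t^{(b)\dag} G_t L_{t-1}^{(b)}$, derive $B_t X_k = X_k B_t$ from Eq.~(\ref{eq:GPGisLDLPLDL}), and invoke the block-diagonality remark. The paper's proof is the same chain of substitutions stated in a single line; your extra attention to the non-uniqueness of $D_t$ and $L_t^{(b)}$ is a sensible clarification but not a departure in method.
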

\begin{proof}
This follows immediately from Eq.~\ref{eq:GPGisLDLPLDL} writing
\begin{equation*}
\left( D_t^\dag L_t^{(b)\dag} G_t L_{t-1}^{(b)} \right)
\left( L_{t-1}^{(b)\dag}\Pi_k^{(b)}L_{t-1}^{(b)} \right)=
\left( L_{t-1}^{(b)\dag} \Pi_k^{(b)}L_{t-1}^{(b)} \right)
\left( D_t^{\dag}L_t^{(b)\dag}G_t L_{t-1}^{(b)} \right)\,\,\,\,\forall\,k
\end{equation*}
using the Remark above.
\end{proof}

The LBF exploits the guarantee of a decomposition of $G_t$ as by Eq.~(\ref{eq:gatedecomposition}), to solve for $\{ L^{(b)}_t\vert {\bf x}^{(b)} \rangle\! \langle {\bf x}^{(b)} \vert L^{(b)\dag}_t \}$ and $D_t$.  To do this, it runs over all possible computational-basis projectors $X_k$ (of arbitrary rank) on the support $b$ of $G_t$, and attempts to recover rank-one (pure) local-basis projectors by solving the following set of non-linear equations (for each qudit $j$ in $b$) to find unknown local basis projectors $\rho^{(j)}$,
\begin{eqnarray}
\label{eq:lbfequations}
& \left [\openone^{(b/j)} \otimes \rho^{(j)},G_t L^{(b)}_{t-1} X_k L^{(b)\dag}_{t-1} G_t^\dag \right ]=0 {\,\,\,\,\,\,\,\,\,\,\,\rm (i)} \nonumber\\
& \!\!\!\!\!\!\!\!\!\!\!\!  \mbox{subject to,} \nonumber \\
& \textrm{Tr}\,( \rho^{(j)} ) = \textrm{Tr}\,(\rho^{(j)\,2}) = \textrm{Tr}\,(\rho^{(j)\,3})=1 {\,\,\,\,\,\,\,\,\,\,\rm (ii)}
\end{eqnarray}

Solutions to these equations will be  sums of local basis projectors  that are mapped to local projectors by $G$. General solutions to (i), for each qudit, can be arbitrary linear combinations of the desired local projectors, and the constraints (ii) are required to solve for solutions that correspond to pure basis states.  Specifically, constraints (ii) impose purity on general Hermitian operators.  For qubits only the conditions on $\textrm{Tr}\,( \rho^{(j)} )$ and $\textrm{Tr}\,(\rho^{(j)\,2})$ are required, whilst the additional condition on $\textrm{Tr}\,(\rho^{(j)\,3})$ is required when the dimension is greater than two\cite{nickjones}.

Our method for solving Eq.~(\ref{eq:lbfequations}) is as follows: First Gaussian elimination is used to find a general Hermitian solution $\rho^{(j)}$ for Eq.~(\ref{eq:lbfequations})(i).  A random instance $\tilde{\rho}^{(j)}$ of $\rho^{(j)}$ typically has support on the same local-basis projectors as $\rho^{(j)}$.  Hence to
derive rank-one solutions of Eq.~(\ref{eq:lbfequations})(i) and (ii), a random choice is made for $\tilde{\rho}^{(j)}$, the eigenvalues of $\tilde{\rho}^{(j)}$ are found from its characteristic polynomial, and the corresponding eigenvector projectors are derived by back substitution into the eigenvector equation. (The process can be repeated to address rare cases where a bad choice is made for $\tilde{\rho}^{(j)}$.)  For each $X_k$, there are three types of solution to  Eq.~(\ref{eq:lbfequations}):
\begin{enumerate}
\item a complete local basis cannot be found for every qudit of $b$ (in which case
$L^{(b)}_{t-1} X_k L^{(b)\dag}_{t-1}$ does not correspond to a valid input FRASE).
\item a complete local basis of unique rank-one projectors is found for each qudit of $b$.
\item a complete local basis of rank-one projectors is found for each qudit of $b$, where at least some of the projectors are not unique.  (This occurs when local-basis projectors are combined in $G_t L^{(b)}_{t-1} X_k L^{(b)\dag}_{t-1} G_t^\dag$ and there is an infinite family of solutions.)
\end{enumerate}
Note that it is only the rank-one projectors, and not the local unitaries $L_t$, that are needed for the simulation algorithm.  In general the LBF finds a complete local basis only for a subset of the $X_k$, and we call this set $\chi$.  The occurrence of non-unique local projectors could potentially cause difficulties when comparing results for different choices of $X_k$.  For example, when $X_k=\openone$ any complete local basis on $b$ is a solution to Eq.~(\ref{eq:lbfequations}).  To address this, we define the {\it $X_k$-unique} local basis (for each qubit of $b$) as the unique combinations of rank-one projectors of minimal rank which are common to all possible local-basis solutions of Eq.~(\ref{eq:lbfequations}).  In other words, the projectors in a $X_k$-unique local basis are the smallest local-basis projectors which are uniquely determined by Eq.~(\ref{eq:lbfequations}).  We denote the $X_k$-unique local basis projectors by $\openone^{(b/j)} \otimes \rho_u^{(j)}$, and they are easily found - for example by repeatedly solving for the local-basis.  We denote by $LB_k$ the full set of {\it $X_k$-unique} local-basis projectors whenever it is defined.

To implement a Monte-Carlo simulation of the concordant computation, as described in Sec.~\ref{Sec:MonteCarloForConcordantComputation}, there must be an unambiguous update rule for every time step.  An unambiguous update rule can be obtained for time step $t$, only if the LBF is able to identify a unique set of local-basis projectors compatible with all $LB_k$ for $X_k\in\chi$.  For this to be possible, it is necessary that the elements of each $LB_k$ are common projector solutions for all $X_{k^\prime}\in\chi$ --- that is to say that
$\left[\openone^{(b/j)} \otimes \rho_u^{(j)},G_t L^{(b)}_{t-1} X_{k^\prime} L^{(b)\dag}_{t-1} G_t^\dag \right]=0\,\,\, \forall \,\openone^{(b/j)} \otimes \rho_u^{(j)} \in LB_k, X_{k^\prime}\in\chi$ --- and the LBF must test all these conditions.  When these conditions are met we term the $LB_k$ compatible, and the following lemma can be applied:
\begin{lemma}
\label{lemma:SuccessfulLBFOutput}
When the $LB_k$ are compatible $\forall X_k\in\chi$, a complete set of rank-one local-basis projectors solutions can be constructed,
$\left\{ L_t \vert {\bf x}^{(b)} \rangle\!\langle {\bf x}^{(b)} \vert L_t^\dag \vert \right\}$,
for which
$\left[
L_t \vert {\bf x}^{(b)} \rangle\!\langle {\bf x}^{(b)} \vert L_t^\dag
\, , \,
G_t L^{(b)}_{t-1} X_k L^{(b)\dag}_{t-1} G_t^\dag
\right]=0
\,\,\, \forall \,{\bf x}^{(b)}, X_k\in\chi$.
\end{lemma}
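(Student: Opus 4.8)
The plan is to reduce the statement to a single-qudit fact about commutant $*$-algebras, read off the algebraic structure forced by the hypothesis $X_k\in\chi$, and then use the compatibility hypothesis to merge the per-gate local bases into one joint basis. Fix a qudit $j\in b$ and write $\tilde X_k:=G_t L^{(b)}_{t-1} X_k L^{(b)\dag}_{t-1} G_t^\dag$. Let $A_k^{(j)}$ denote the set of operators $\rho^{(j)}$ on qudit $j$ with $[\openone^{(b/j)}\otimes\rho^{(j)},\tilde X_k]=0$; since this is the preimage of the commutant $\{\tilde X_k\}'$ under the embedding $\rho^{(j)}\mapsto\openone^{(b/j)}\otimes\rho^{(j)}$, it is a unital $*$-subalgebra of the operators on $\mathcal{H}_j$, and its self-adjoint part is exactly the solution set of Eq.~(\ref{eq:lbfequations})(i) for $X_k$. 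The whole lemma then reduces to the claim that $A_\star^{(j)}:=\bigcap_{X_k\in\chi}A_k^{(j)}$ contains a complete orthogonal set of rank-one projectors of $\mathcal{H}_j$. Granting this for every $j\in b$, one finishes with a telescoping step: if $\openone^{(b/j)}\otimes q^{(j)}$ commutes with $\tilde X_k$ for every $j$, then so does the product $\bigotimes_{j\in b} q^{(j)}$ (move $\tilde X_k$ past one factor at a time), and relabelling the resulting orthogonal product basis as $\{L^{(b)}_t\vert{\bf x}^{(b)}\rangle\!\langle{\bf x}^{(b)}\vert L^{(b)\dag}_t\}$, with $L^{(b)}_t$ a product of single-qudit rotations onto the $q^{(j)}$ and $L_t$ unchanged off $b$, gives precisely the asserted projectors.

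Next I would pin down $A_k^{(j)}$ using $X_k\in\chi$. Any $*$-subalgebra of a finite-dimensional matrix algebra is, up to unitary conjugation, a direct sum of full matrix algebras embedded with multiplicities; that $X_k\in\chi$ means the LBF found a complete rank-one local basis on qudit $j$ lying inside $A_k^{(j)}$, and the existence of such a basis forces every multiplicity to equal one. Hence $A_k^{(j)}=\bigoplus_\alpha\mathcal{L}(V_{k,\alpha})$ for an orthogonal decomposition $\mathcal{H}_j=\bigoplus_\alpha V_{k,\alpha}$. In this situation the $X_k$-unique projectors $LB_k$ restricted to qudit $j$ are exactly the projectors $z_{k,\alpha}$ onto the subspaces $V_{k,\alpha}$ (the minimal central projectors of $A_k^{(j)}$): inside a one-dimensional $V_{k,\alpha}$ the rank-one projector is common to every local-basis solution, whereas inside a higher-dimensional one no rank-one projector is common but the subspace projector is, so $z_{k,\alpha}$ is the minimal-rank projector uniquely determined there. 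The compatibility hypothesis asserts precisely that each $z_{k,\alpha}$ lies in $A_\star^{(j)}$; in particular any two of them commute, since $z_{k,\alpha}\in A_{k'}^{(j)}$ while $z_{k',\beta}$ is central in $A_{k'}^{(j)}$.

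Then I would assemble the joint basis. Because the $z_{k,\alpha}$ (over all $X_k\in\chi$ and all $\alpha$) mutually commute and lie in $A_\star^{(j)}$, their common refinement is a well-defined set of orthogonal projectors $\{w_\gamma\}$ with $\sum_\gamma w_\gamma=\openone$, each $w_\gamma\in A_\star^{(j)}$, and for each $k$ there is a unique $\alpha$ with $w_\gamma\le z_{k,\alpha}$. Compressing the full matrix algebra $z_{k,\alpha}A_k^{(j)}z_{k,\alpha}=\mathcal{L}(V_{k,\alpha})$ by $w_\gamma$ yields $w_\gamma A_k^{(j)}w_\gamma=\mathcal{L}(\textrm{range}\,w_\gamma)$, so every operator supported on $\textrm{range}(w_\gamma)$ lies in $A_k^{(j)}$ for every $X_k\in\chi$, hence in $A_\star^{(j)}$. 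Therefore any decomposition of $w_\gamma$ into orthogonal rank-one projectors lands inside $A_\star^{(j)}$; collecting such decompositions over all $\gamma$ gives a complete rank-one orthogonal basis of $\mathcal{H}_j$ contained in $A_\star^{(j)}$, which together with the telescoping step proves the lemma.

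The step I expect to be the real obstacle is this last merging: ensuring that the \emph{intersection} algebra $A_\star^{(j)}$ still contains a complete rank-one basis. The naive attempt — take the common refinement of the $LB_k$ and refine it to rank one arbitrarily — fails in general, because a rank-one projector below a common projector need not commute with the individual $\tilde X_k$. The compatibility hypothesis is exactly what closes this gap, and the structural fact that makes it work is that $X_k\in\chi$ forces each $A_k^{(j)}$ to be a direct sum of \emph{full} matrix algebras, so that a compression of any of its blocks by a sub-projector is again a full matrix algebra. This is also the reason the joint statement, rather than the per-gate ones, carries all of the content of the lemma.
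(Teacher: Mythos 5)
Your proposal is correct, and it follows the same broad outline as the paper's proof: form the common refinement of the $LB_k$ projectors (your $\{w_\gamma\}$ is the paper's $LB_t^\prime$ restricted to qudit $j$), then decompose the higher-rank blocks into rank-one projectors. Where you genuinely improve on the paper's argument is in the justification of the last step. The paper simply applies Gram--Schmidt to the columns of each $\rho_u^{(j)}\in LB_t^\prime$ and asserts that the resulting rank-one projectors are the required local-basis solution, but it never explains why an \emph{arbitrary} rank-one sub-projector of a commuting block should itself commute with $G_t L^{(b)}_{t-1} X_k L^{(b)\dag}_{t-1}G_t^\dag$. That implication is false for a general commuting projector: if $A$ commutes with $P$ it need not commute with a sub-projector of $P$. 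Your $*$-algebra argument supplies exactly the missing ingredient --- the condition $X_k\in\chi$ (existence of a complete rank-one local basis) forces each commutant algebra $A_k^{(j)}$ to have all multiplicities equal to one, hence to be a direct sum of full matrix algebras; compatibility then puts each minimal central projector $z_{k,\alpha}$ into the intersection $A_\star^{(j)}$, so the compression $w_\gamma A_k^{(j)}w_\gamma$ is a full matrix algebra on $\mathrm{range}(w_\gamma)$, and any rank-one decomposition of $w_\gamma$ lands in $A_\star^{(j)}$. Two small remarks: your mutual-commutation argument ($z_{k,\alpha}\in A_{k'}^{(j)}$ while $z_{k',\beta}$ is central in $A_{k'}^{(j)}$) also justifies a step that the paper takes for granted, namely that the products defining $LB_t^\prime$ are well-defined projectors; and the telescoping step across qudits is straightforward since the $\openone^{(b/j)}\otimes q^{(j)}$ for different $j$ commute with each other as they act on disjoint tensor factors.
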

\begin{proof}
The set
$LB_{t}^\prime=
\left\{
\text{non-zero projectors}\,
\openone^{(b/j)}\otimes \left( \rho _{u_1}^{(j)}\rho _{u_2}^{(j)}\cdots \rho _{u_{\vert \chi \vert}}^{(j)} \right) \big{\vert} \forall \openone^{(b/j)}\otimes \rho _{u_k}^{\left( j\right) }\in LB_{k} ,\forall j\right\}$.
$LB_{t}^\prime$
is a complete local-basis projector set, and represents a fine-graining of all the $LB_k$, and
$\left[\openone^{(b/j)}\!\otimes\! \rho _u^{(j)},\,G_t L_{t-1}^{(b)} X_{k^\prime} L_{t-1}^{\left( b\right) \dag }G_{t}^{\dag }\right]=0$
$\forall \openone^{(b/j)}\!\otimes\! \rho _u^{(j)}\in LB_t^\prime$,
$X_{k^\prime}\in \chi$.
To find a complete set of rank-one local-basis projectors solutions, the projectors $\rho _u^{\left( j\right) }$ where $\openone^{(b/j)}\otimes \rho _u^{\left( j\right) }\in LB_t$ can be decomposed into rank-one projectors. The vectors that correspond to these rank-one projectors can be obtained by orthogonalising the set of column-vector entries of $\rho _{u}^{\left( j\right) }$ using a Gram--Schmidt process; finally the projectors can be renormalised to have trace $1$.
\end{proof}
A summary of the key steps of our LBF routine in pseudo code is given in Fig.~\ref{fig:LBFPseudoCode}.  The possibility and implications of gates having multiple inconsistent local-basis solutions is taken up in Sec.~\ref{Sec:IncompatibleLBSolns}.

\begin{figure}[t]
\begin{minipage}{0.9\textwidth}
(i) {\bf Pseudo-code for LBF}
\begin{enumerate}
\item INPUT Gate $G_{t}$ with support on qudits in $b$ and the set of rank-one projectors
${L^{(b)}_{t-1}\vert {\bf x}^{(b)} \rangle\!\langle {\bf x}^{(b)} \vert L^{\dag(b)}_{t-1}}$
\item REPEAT For every projector $X_k$ in the computational basis
\subitem SUBROUTINE: SOLVE for local-basis projectors of $G_t L_{t-1} X_k L_{t-1}^{\dag} G^\dag_t$  (see (ii) below).
\subitem IF complete local-basis solution THEN RECORD $X_k$-unique basis in $LB_k$ and $k$ in $EXISTS\!-\!LIST$
\item REPEAT for every $k,k^\prime$ in $EXISTS\!-\!LIST$
\subitem IF all projectors in $LB_k$ commute with $G_t L_{t-1} X_{k^\prime} L_{t-1}^{\dag} G^\dag_t$ DO NOTHING
\subitem ELSE RECORD ``Incompatible solutions''
\item IF ``Incompatible solutions'' THEN OUTPUT ``Local-basis ambiguity at time step $t$'' and STOP
\newline ELSE
\subitem FIND Complete rank-one local-basis projector set compatible with $LB_k$ for all $k$ in $EXISTS\!-\!LIST$
\subitem RECORD rank-one projectors in $LB_t$, and
$\mathcal{B}_t=\{\beta_j\,\vert\,{\rm combining\,over}\,{\bf x}^{(b)}\in\beta_j\,{\rm defines\,unique\,projectors}\}$
\subitem FIND (any) $D_t$ consistent with
$G_t L_{t-1} X_j L_{t-1}^\dag G_t^\dag = L_t D_t X_j D_t^\dag L_t^\dag$ $\forall j$,
where
$X_j=\sum_{{\bf x}^{(b)}\in\beta_j}\vert {\bf x}^{(b)} \rangle\!\langle {\bf x}^{(b)} \vert$
\item OUTPUT $LB_t$ and $D_t$
\end{enumerate}
\vspace{5pt}
(ii) {\bf Pseudo-code for solving local-basis equations}
\begin{enumerate}
\item REPEAT for each qudit $j$ in $(b)$ and for confirmation of the solution
\subitem SET $\rho^{(j)}\propto c_0 \openone + \sum_l c_l \sigma_l$ for Hermitian basis matrices $\sigma_l$ and real expansion coefficients $c_l$
\subitem SET $n_l=[\sigma_l,G_t L_{t-1} X_k L_{t-1}^{\dag} G^\dag_t]$
\subitem SET linear-equation system $\Xi$ as matrix equation $\sum_l c_l n_l = 0$
\subitem APPLY gaussian elimination on $\Xi$ for general Hermitian solution for $\rho^{(j)}$
\subitem PICK random instance $\tilde{\rho}^{(j)}$ of $\rho^{(j)}$ by random choices for free $c_l$
\subitem APPLY root solver on characteristic equation of $\tilde{\rho}^{(j)}$
\subitem APPLY back substitution into eigenvalue equation for rank-one eigenprojectors
$L_t\vert {\bf x}^{(j)}\rangle\!\langle {\bf x}^{(j)} \vert L^\dag_t$ of $\tilde{\rho}^{(b)}$
\item IF complete local basis solution THEN FIND corresponding $X_k$-unique basis following Lemma~\ref{lemma:SuccessfulLBFOutput}
\end{enumerate}
\end{minipage}
\begin{caption}
{\label{fig:LBFPseudoCode} (i) Pseudo-code for LBF: uses promise of existence of a decomposition of gate $G_t$ (following Lemma~\ref{lemma:LDBLGateDecomposition}), to find local projectors $\{L_t\vert {\bf x}^{(b)}\rangle\!\langle {\bf x}^{(b)} \vert\}$ and classical-reversible gate $D_t$ on the support $b$ of $G_t$.  The pseudo-code heralds cases where $G_t$ is consistent with multiple {\it incompatible} (complete) sets of local projectors.  (ii) Pseudo-code for subroutine for solving nonlinear Eqs.~(\ref{eq:lbfequations}) for the local projectors.  The subroutine can be repeated to exclude the possibility of pathological choices for $\tilde{\rho}^{(j)}$.  When a complete local basis is found for all qudits in $b$, the subroutine returns the $X_k$-unique version of it; one way to find the $X_k$-unique basis is by using the promise of projectors with integer matrices (see Sec.~\ref{Sec:ExactnessAndEfficiency}).}
\end{caption}
\end{figure}

\subsection{Implementing the LBF using exact arithmetic}
\label{Sec:ExactnessAndEfficiency}

Errors in the local projectors from time step $t-1$ in our simulation algorithm,
${L_{t-1}^{(b)}\vert{\bf x}^{(b)}\rangle\!\langle{\bf x}^{(b)}|L_{t-1}^{(b)\dag}}$,
can cause a failure to find a complete local basis for $G_{t} L_{t-1}^{(b)} X_{k} L_{t-1}^{(b)\dag}G_t^\dag$
for time step $t$, even though one must exist for the error-free case by the promise of concordant computation.  The simulation algorithm proposed in Sec.~\ref{Sec:GateStructureAndLBF}A has no way of detecting and correcting errors, and they must be prevented from occurring.  To address this issue, we look in detail at implementation of our simulation algorithm using integer arithmetic.  Important goals here are to avoid unreasonable restrictions on the form of the concordant computations which can be simulated using our algorithm, and to ensure that the LBF does not incur excessive demands on computational resources, which should scale polynomially with the number of time steps with reasonable constraints on number size and memory.  We permit irrational numbers in our simulation algorithm when they can be handled using (integer-based) exact arithmetic, and we have found it necessary to involve computations on (irrational) algebraic numbers for some intermediate steps.

First we modify the definitions of concordant states and concordant computation used thus far.  We call a gate or projector rational if it its matrix representation in the computational basis has only rational entries.  Augmenting the definition of a concordant state given in Sec.~\ref{Sec:SubsystemEigenprojectorDecomposition}, we define a concordant state $\rho$ to be {\it rationally concordant} if every subsystem possesses a complete set of rational orthogonal rank-1 projectors $\pi^{(j)}_{k_j}$ such that $\rho=\sum_{k_1,k_2,\cdots} p\!\left({k_1,k_2,\cdots}\right) \pi^{(1)}_{k_1}\otimes\pi^{(2)}_{k_2}\cdots$, and $p\!\left({k_1,k_2,\cdots}\right)$ is a rational probability distribution.  Then we can define a {\it rationally-concordant} computation as a concordant computation for which the system states are also rationally concordant at every time step, and in addition the projectors and probability distributions defining the initial state are rational, as are the gates $G_t$ for all time steps.  As an aside, we point out that our choice to use projectors to represent local bases in our simulation algorithm, (rather than the matrices $L^{(j)}_t$ themselves), prevents many standard gates and states from being excluded by the definition of {\it rationally-concordant} computation here.  Our aim is to involve local rotations which are proportional to (complex)-integer matrices but (generically) have irrational (surd) normalisation factors, such as the Hadamard gate.  For projectors defined using (complex) integer or rational entries, normalization proceeds by dividing out the trace, and surds are avoided.  However a gate such as the $\pi/8$ gate, which is
$\left(\begin{smallmatrix} 1 & 0 \\ 0 & (1+I)/\sqrt{2} \end{smallmatrix} \right)$, must be excluded whenever it would generate surd factors between entries of a local projector occurring in the simulation.

\begin{lemma}
\label{lemma:LBFExactImplementation}
An implementation of the LBF described in Sec.~\ref{Sec:ExplanationOfLBF} using exact arithmetic finds a complete set of rational rank-one projectors for
$\left\{ L_{t}^{(b)}\left\vert {\bf x}^{(b)}\right\rangle \!\! \left\langle {\bf x}^{(b)}\right\vert L_{t}^{(b)\dag }\right\}$,
given rational $G_t$ and rational rank-1 local basis projectors
$\left\{ L_{t-1}^{(b)}\left\vert {\bf x}^{(b)}\right\rangle \!\! \left\langle {\bf x}^{(b)}\right\vert L_{t-1}^{(b)\dag }\right\}$,
provided all possible local basis solutions for the gate are compatible.  Computations using algebraic numbers can be required at intermediate steps.
\end{lemma}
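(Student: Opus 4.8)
The plan is to go through the LBF and its solving subroutine (Fig.~\ref{fig:LBFPseudoCode}) line by line, tracking at each stage the number field in which the objects produced live and verifying that each operation is exactly implementable. Everything rests on the observation that the data entering the subroutine are rational: $G_t$ is rational by the definition of a rationally-concordant computation, and each $L_{t-1}^{(b)}|{\bf x}^{(b)}\rangle\langle{\bf x}^{(b)}|L_{t-1}^{(b)\dag}$ is rational by hypothesis, so for every computational-basis projector $X_k$ on $b$ the operator $M_k:=G_t L_{t-1}^{(b)} X_k L_{t-1}^{(b)\dag} G_t^\dag$, being a sum of products of rational matrices, is a rational Hermitian projector. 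This $M_k$ is precisely the operator appearing in Eq.~(\ref{eq:lbfequations}).

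For the linear constraint Eq.~(\ref{eq:lbfequations})(i), I would expand the unknown Hermitian $\rho^{(j)}$ on qudit $j$ against a fixed basis $\{\sigma_l\}$ of Hermitian matrices with Gaussian-integer entries (matrix units $E_{mm}$, $E_{mn}+E_{nm}$, $i(E_{mn}-E_{nm})$). The commutator condition then reads $\sum_l c_l[\openone^{(b/j)}\otimes\sigma_l,M_k]=0$ with each $[\openone^{(b/j)}\otimes\sigma_l,M_k]$ having entries in $\mathbb{Q}[i]$, and separating real and imaginary parts turns this into a homogeneous linear system over $\mathbb{Q}$ in the real coefficients $c_l$, solved by exact Gaussian elimination over $\mathbb{Q}$; a random rational choice of the free $c_l$ then gives a rational instance $\tilde\rho^{(j)}$. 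Two structural facts should be recorded. First, the admissible $\rho^{(j)}$ --- for the run at $X_k$, or intersected over all $X_{k'}\in\chi$ --- form a unital $*$-subalgebra of $\mathcal{M}_{d_j}(\mathbb{C})$ (commutants of Hermitian operators are closed under products, adjoints and linear combinations), defined over $\mathbb{Q}$ since $M_k$ is rational. Second, the set of \emph{all} complete local-basis solutions of the full system Eq.~(\ref{eq:lbfequations}) is cut out by polynomial equations with rational coefficients (the commutator entries together with the purity and completeness conditions), hence is defined over $\mathbb{Q}$.

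The only step that can introduce irrationality is the spectral one: the characteristic polynomial of $\tilde\rho^{(j)}$ has rational coefficients, but its roots --- the eigenvalues --- are in general irrational real algebraic numbers, and the eigenvectors produced by back-substitution, with the rank-one eigenprojectors built from them, have entries in the number field they generate. I would carry this out with exact algebraic-number arithmetic (each eigenvalue stored as minimal polynomial plus isolating interval, arithmetic performed in a splitting field); this is the source of the ``algebraic numbers at intermediate steps'' in the statement. It then must be argued that the \emph{output} is nonetheless rational. The key point is that the $X_k$-unique projectors $\rho_u^{(j)}$ are, by definition, \emph{uniquely} characterised as the minimal-rank combinations of rank-one projectors common to \emph{all} complete local-basis solutions. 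Since that family is defined over $\mathbb{Q}$ and any automorphism $\sigma$ of the ambient number field fixing $\mathbb{Q}$ permutes the family while preserving ranks, $\sigma$ fixes the unique object $\rho_u^{(j)}$; being $\sigma$-invariant for every such $\sigma$, $\rho_u^{(j)}$ is rational. Concretely, the (possibly irrational) eigenprojectors of $\tilde\rho^{(j)}$ only serve to locate the combining pattern --- which eigenspaces must be grouped --- and each group is then seen to be a Galois-stable union of eigenvalues, so the corresponding spectral projector has rational entries; alternatively, as the caption of Fig.~\ref{fig:LBFPseudoCode} notes, $\rho_u^{(j)}$ can be extracted directly over $\mathbb{Q}$ using the promise that the projectors have rational matrices, bypassing the random instance for this purpose.

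For the final assembly, Lemma~\ref{lemma:SuccessfulLBFOutput} already shows that, under compatibility, the fine-grained set $LB_t'$ (products over $k\in\chi$ of the $X_k$-unique projectors) is complete and that the rank-one projectors obtained from it by a Gram--Schmidt orthogonalisation of column vectors followed by renormalisation commute with every $M_k$; the only thing to add is that, since the $\rho_u^{(j)}$ are now known to be rational, these products are rational, their column spaces are rational, and a Gram--Schmidt carried out over $\mathbb{Q}$ followed by dividing out a rational squared norm yields rational rank-one projectors $\{L_t^{(b)}|{\bf x}^{(b)}\rangle\langle{\bf x}^{(b)}|L_t^{(b)\dag}\}$. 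The remaining bookkeeping --- the compatibility tests (commutator evaluations over $\mathbb{Q}$) and the search for a $D_t$ with $G_t L_{t-1}^{(b)} X_j L_{t-1}^{(b)\dag} G_t^\dag = L_t^{(b)} D_t X_j D_t^\dag L_t^{(b)\dag}$ for the recorded blocks (a finite comparison, solvable by Eq.~(\ref{eq:GPGisLDLPLDL})) --- is manifestly exact. I expect the main obstacle to be making the rationality claim airtight: reconciling the genuinely irrational eigenprojectors of the random instance with the rational output hinges on identifying the $X_k$-unique combinations precisely with the Galois-stable ones, which draws on both the uniqueness built into the FRASE/subsystem-eigenprojector machinery of Sec.~\ref{Sec:SubsystemEigenprojectorDecomposition} and the rational-concordance promise that a rational complete local basis exists at all.
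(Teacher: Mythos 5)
Your proposal is correct and follows the same overall route as the paper's proof: track number fields through the steps in Fig.~\ref{fig:LBFPseudoCode}(ii), reduce the commutator constraint to a homogeneous linear system over $\mathbb{Q}$ (the paper insists on integers and Bareiss/Hermite normal form, but that distinction is a complexity issue handled in Lemma~\ref{lemma:LBFComplexity}, not a correctness issue here), take a random rational instance $\tilde\rho^{(j)}$, locate its eigenvalues exactly via the characteristic polynomial, observe that irrational eigenvalues correspond to rank-one eigenprojectors that must be recombined into higher-rank rational projectors, and finally orthogonalise over $\mathbb{Q}$ to extract rational rank-one projectors. The one place you genuinely diverge is in \emph{why} the $X_k$-unique projectors are rational. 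The paper argues from the promise: since a rational complete local-basis solution is guaranteed to exist for some $X_k$, and since irrational eigenvalues of $\tilde\rho^{(j)}$ can only arise from rational projectors of rank $>1$ in that solution, one simply tests combinations of rank-one eigenprojectors until the rational higher-rank projectors are found. You instead give a Galois-descent argument: the family of all complete local-basis solutions of Eq.~(\ref{eq:lbfequations}) is defined over $\mathbb{Q}$, any automorphism of the relevant number field over $\mathbb{Q}$ permutes that family while preserving ranks, so it must fix the (unique, minimal-rank) $\rho_u^{(j)}$, which is therefore rational. This is a cleaner and more self-contained justification of the central rationality claim, which the paper essentially takes on trust from the rational-concordance promise; it also makes manifest the concrete recipe of grouping eigenspaces into Galois-stable blocks. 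On the orthogonalisation step your ``Gram--Schmidt over $\mathbb{Q}$ then divide by the (rational) squared norm'' is equivalent in substance to the paper's explicit fraction-free Gram--Schmidt formulas. Overall, a correct proof with a worthwhile strengthening of the rationality argument at the key step.
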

\begin{proof}
Part (i) is for the LBF applied to a single projector, $G_{t}L_{t-1}X_{k}L_{t-1}^{\dag }G_{t}^{\dag}$.  Part (ii) is for finding $LB_t$ from the $LB_k$ when they are compatible, and for resolving higher-rank projectors into rational rank-1 projectors.

(i) We refer to Fig.~\ref{fig:LBFPseudoCode}(ii) for the steps involved in solving for $X_k$-unique local basis projectors, and we give an implementation for them using integer computations: By making integer choices for the Hermitian basis matrices $\sigma_l$, an integer system of equations $\Xi$ can be obtained (for a specific qudit $j$).  A Gaussian elimination method can be applied to $\Xi$ to solve for the general integer Hermitian solution $\rho^{(j)}$.  More specifically, the Hermite normal form for $\Xi$, (an analogue of reduced echelon form for matrices over the integers), can be obtained in polynomial time using Bareiss's algorithm, without suffering an exponential blowup in the memory requirements (see chapter 10 of Ref.~\cite{CheeYap}).  A random choice for (Hermitian) $\tilde{\rho}^{(j)}$ can be made which is an integer matrix (by integer choices of the free variables post Gaussian elimination).

The characteristic equation for the eigenvalues of $\tilde{\rho}^{(j)}$ is then a (real) monic polynomial with integer coefficients, and its solutions must be real.  The elementary rational root test for polynomials dictates that the roots are either integer or irrational algebraic numbers.  Both integer and irrational roots play an essential role for finding local-basis projectors.  Hence we note that exact arithmetic operations can be performed on algebraic numbers using only integer/rational computations.  This can be done by manipulations of polynomials defining the algebraic numbers, for example using an encoding for which the polynomials are represented by companion matrices and field operations are performed using matrix manipulations (see Ref.~\cite{LiLutzer04} for an introductory discussion on this).

A method based on Sturm's theorem can be used to find the eigenvalues of $\tilde{\rho}^{(j)}$ (for a treatment of Sturm's theorem see chapter 7 of \cite{CheeYap}). This theorem can be applied to the characteristic equation for $\tilde{\rho}^{(j)}$ to find the number of distinct roots in any arbitrary interval $(I_1,I_2]$, by using a Sturm sequence for the characteristic polynomial.  More specifically, the number of roots in the interval is given by the difference in the number of sign changes for the values of Sturm sequence when evaluated at $I_1$ and $I_2$. The eigenvalues of $\tilde{\rho}^{(j)}$ can be found by a simple search method which repeatedly bisects a starting interval, at each step selecting one half interval which contains at least one root. This search method finds the eigenvalues exactly when they are integer, and it generates an isolating interval when the eigenvalues are irrational.

Once the eigenvalues of $\tilde{\rho}^{(j)}$ have been found, back substitution is used to find the rank-1 projector solutions. These solutions must be renormalised to have trace value 1. Integer eigenvalues lead to rational eigenprojectors, which are already part of the required $X_{k}-$unique local-basis solution. The existence of rational local-basis projector solutions with rank greater leads to eigenvalues which are irrational, and the associated rank-1 eigenprojectors must also contain irrational numbers.  It is necessary to test combinations of these rank-1 eigenprojectors to find higher-rank projectors which are rational overall. The minimal-rank rational projectors formed this way must be added to the $X_{k}$-unique local-basis solution.  The promise of rational concordant computation guarantees that a complete rational local basis can be found for at least one $X_{k}$.

(ii)When the $LB_k$ are compatible, we can find a fined-grained complete local-basis projector set with elements of rank$\geq 1$ following the Proof of Lemma~\ref{lemma:SuccessfulLBFOutput} above. It is necessary to verify that higher-rank rational projectors can be decomposed into rank-one projectors which are also rational. For this we employ a modified form of Gram-Schmidt as follows: Let $v_1$,$v_2$,$\cdots$ be the column vectors of projector $\rho_u^{\left( j\right)}$. The vectors $v^\prime_1$,$v^\prime_2$,$\cdots$ defined as,
\begin{eqnarray}
v_1^\prime&=&v_1 \nonumber \\
v_2^\prime&=&\langle v_1^\prime ,v_1^\prime \rangle v_2 \!-\!
\langle v_2,v_1^\prime \rangle^\ast v_1^\prime \nonumber \\
v_3^\prime&=&
\langle v_2^\prime ,v_2^\prime\rangle \langle v_1^\prime ,v_1^\prime \rangle v_3 \!-\!
\langle v_3,v_2^\prime\rangle^\ast \langle v_1^\prime,v_1^\prime \rangle v_2^\prime \!-\!
\langle v_3,v_1^\prime\rangle^\ast \langle v_2^\prime ,v_2^\prime \rangle v_1^\prime \nonumber \\
v_4^\prime&=&
\langle v_3^\prime ,v_3^\prime \rangle \langle v_2^\prime ,v_2^\prime \rangle \langle v_1^\prime ,v_1^\prime \rangle v_4\!-\!
\langle v_4,v_3^\prime\rangle^\ast \langle v_2^\prime ,v_2^\prime \rangle \langle v_1^\prime ,v_1^\prime \rangle v_3^\prime\!-\!
\langle v_4,v_2^\prime\rangle^\ast \langle v_3^\prime ,v_3^\prime \rangle \langle v_1^\prime ,v_1^\prime \rangle v_2^\prime\!-\!
\langle v_4,v_1^\prime\rangle^\ast \langle v_3^\prime ,v_3^\prime \rangle \langle v_2^\prime ,v_2^\prime \rangle v_1^\prime \nonumber \\
&& {\rm etc.} \nonumber
\end{eqnarray}
are rational and orthogonal.  The rank-one rational projectors
$
\vert v_1 \rangle\!\langle v_1 \vert/Tr({\vert v_1 \rangle\!\langle v_1 \vert}),
\vert v_2 \rangle\!\langle v_2 \vert/Tr({\vert v_2 \rangle\!\langle v_2 \vert}),
\cdots
$
give the required decomposition into rank-one projectors.
\end{proof}

Next we consider the computational requirements for the LBF implementation described in Lemma.~\ref{lemma:LBFExactImplementation} (for exact computation).  First of all, we observe that the (worst-case) computational overhead for the LBF scales poorly with gate size: For a gate with support having dimension $d$, the total number of projectors of all ranks to which the LBF might be applied scales as $O(2^d)$, a scaling which is doubly exponential with respect to the number of qudits.  In the worst case, the LBF finds a local-basis solution for only one pair of input projectors, and the LBF must be applied to input projectors of all rank between $1$ and $[d/2]$ (note that $\Pi $ and $\openone-\Pi $ must share the same local basis). (In the simplest case, the LBF is applied first to all possible one-dimensional projectors and the output projectors are found to carry the same local-basis, in which case it not necessary to check higher-rank input projectors.)  Hence when considering the computational complexity, we consider the dependence on circuit size for a fixed maximum gate size.

Focusing now on the complexity for computations performed by the LBF given a specific choice of input projector, we can see that all steps involved can be performed efficiently.  The key mathematical steps used by the LBF are: Gaussian elimination (and back substitution), for which complexity scales polynomially with respect to the matrices involved, and root finding, which is efficient due to the use of a bisection method.  Furthermore, the majority of the calculations performed by the LBF use only integer matrices.  Where irrational numbers do occur, however, there are large computational overheads due to the need to perform arithmetic operations using algebraic numbers with no loss of precision.  The promise of rational concordant computation ensures that all irrational contributions must cancel for the output.  Consequently, the computational cost for handling algebraic numbers can be regarded as a fixed overhead that does not undermine the efficiency of the LBF, for increasing numbers of gates.

The computational requirements for the LBF are addressed by the following lemma:
\begin{lemma}
\label{lemma:LBFComplexity} The computational complexity to solve for local-basis updates following Lemma~\ref{lemma:LBFExactImplementation} scales, in regard of both time and space (memory), polynomially with respect to the number of circuit gates (for a fixed maximum gate size), and the total bits required to represent each gate and the initial state of each qudit.
\end{lemma}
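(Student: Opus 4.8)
<br>

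The plan is to bound the computational cost of a single time step and then multiply by the number of gates, tracking the bit-size of all intermediate quantities so that the per-step cost stays polynomial in the input parameters. Let $B$ denote an upper bound on the number of bits needed to represent any gate $G_t$ and the initial data $L_0^{(j)}, p_0^{(j)}$, let $T$ be the number of gates, and fix the maximum gate dimension $d$ (so $d = O(1)$ by assumption). The key quantity to control is the bit-size of the rational projectors $L_{t}^{(b)}\vert{\bf x}^{(b)}\rangle\!\langle{\bf x}^{(b)}\vert L_{t}^{(b)\dag}$ as $t$ increases. First I would show that each application of the LBF subroutine of Fig.~\ref{fig:LBFPseudoCode}(ii) to a single input projector $G_t L_{t-1}^{(b)} X_k L_{t-1}^{(b)\dag} G_t^\dag$ runs in time and space polynomial in the bit-size of its inputs: forming the commutator matrices $n_l$ and the linear system $\Xi$ is a fixed number of integer matrix multiplications on matrices of size $\le d\times d$; Bareiss's algorithm computes the Hermite normal form of $\Xi$ in polynomial time with no exponential coefficient blow-up (as invoked in Lemma~\ref{lemma:LBFExactImplementation} via Ref.~\cite{CheeYap}); the random integer instance $\tilde\rho^{(j)}$ has bit-size polynomial in that of $\Xi$; the characteristic polynomial has integer coefficients of polynomially-bounded size; and Sturm-sequence root isolation by bisection to a precision sufficient to separate the (finitely many, $O(1)$) roots is polynomial-time, as is back-substitution followed by the modified Gram--Schmidt of Lemma~\ref{lemma:LBFExactImplementation}. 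Since $d$ is fixed, the number of projectors $X_k$ tested is $O(2^d) = O(1)$, so the whole of Fig.~\ref{fig:LBFPseudoCode}(i) for one time step — including the pairwise compatibility checks over $EXISTS\!-\!LIST$ and the search for $D_t$ — is a constant number of invocations of polynomial-time routines, hence polynomial in the bit-size of $G_t$ and of the incoming $L_{t-1}^{(b)}$-projectors.

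Next I would set up an induction on $t$ to bound the bit-size $\beta_t$ of the rational rank-one projectors $\{L_t^{(b)}\vert{\bf x}^{(b)}\rangle\!\langle{\bf x}^{(b)}\vert L_t^{(b)\dag}\}$. By Lemma~\ref{lemma:LBFExactImplementation} the step-$t$ output is rational whenever the compatibility promise holds, and the point is that its numerator and denominator are polynomially bounded in $\beta_{t-1}$ and $B$: each projector entry is a rational function (of bounded degree, since the matrices are $\le d\times d$) of the entries of $G_t$ and the previous projectors, built through Gaussian elimination, eigenprojector back-substitution, and Gram--Schmidt, all of which multiply bit-lengths by at most a fixed polynomial factor per step. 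This gives a recurrence $\beta_t \le P(\beta_{t-1}, B)$ for a fixed polynomial $P$, and unrolling it over $T$ steps gives $\beta_t = \mathrm{poly}(B)^{O(T)}$ in the worst naive bound — which is too weak. To fix this I would argue, as the text does in Sec.~\ref{Sec:ExactnessAndEfficiency}, that the rational output of each step is an \emph{intrinsic} object (the $X_k$-unique local basis is uniquely determined by the FRASE decomposition of $\rho_t$ by Corollary~\ref{corollary:DecompositionConcordantStates}), so its bit-size is governed not by the accumulated arithmetic history but by the invariant data: the FRASEs of $\rho_t$ are permutations (by the $D_k$'s) of the FRASEs of $\rho_0$, and the local unitaries are built from the $L_0^{(j)}$ by applying the gate structure, so $\beta_t \le \mathrm{poly}(B, t)$ directly. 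This is where I expect the argument to be delicate, and it is the main obstacle: one must show that the \emph{particular} rank-one representatives the LBF outputs (after Gram--Schmidt, which is not canonical) can be chosen, or are automatically, of size $\mathrm{poly}(B,t)$ rather than of size exponential in $t$, and that the algebraic-number manipulations at intermediate steps (which by Lemma~\ref{lemma:LBFExactImplementation} may be required even though the final answer is rational) do not themselves blow up — this is handled by the observation in the text that all surd contributions cancel at the output and the algebraic-number cost is therefore a fixed overhead per step depending only on $d$ and $B$, not on $t$.

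Finally I would assemble the pieces: the per-step time and space are bounded by a fixed polynomial in $\beta_{t-1}$, $B$, and $\log T$ (for indexing), and $\beta_{t-1} \le \mathrm{poly}(B,T)$ by the induction just described, so the per-step cost is $\mathrm{poly}(B,T)$; summing over the $T$ steps of the circuit keeps the total within $\mathrm{poly}(B,T)$ for both time and memory, where memory is $O(1)$ projectors of size $\le d\times d$ with entries of bit-size $\beta_t$ carried forward plus the recorded $D_t$'s (each $O(1)$-size permutation data). Hence the computational complexity of solving for all local-basis updates following Lemma~\ref{lemma:LBFExactImplementation} is polynomial in the number of circuit gates (for fixed maximum gate size) and in the total number of bits needed to represent each gate and the initial single-qudit data, as claimed. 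I would remark in passing that this lemma says nothing about whether the \emph{compatibility} promise holds — that is precisely the issue deferred to Sec.~\ref{Sec:IncompatibleLBSolns} — so the statement is conditional on the LBF not halting with ``Local-basis ambiguity''.
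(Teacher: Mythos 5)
Your proposal takes a genuinely different and more structured route than the paper's proof. The paper's Appendix~\ref{Sec:AppendixForThree} is essentially an itemized cost catalogue for a \emph{single} time step: for each of the four classes of operation (integer arithmetic, Gaussian elimination via Bareiss, Sturm-sequence root isolation, arithmetic on algebraic numbers), it records the running time and number-size growth as a function of the dimensions $d$, $d_j$ and the digit length $l$ of the entries at the start of that step. What the paper's proof does \emph{not} do explicitly is close the loop across time steps: it records that Bareiss's algorithm keeps numbers to size $O(d_j(\log d_j + l))$ per step, but it never argues why the $l$ that feeds the step-$t$ computation stays $\mathrm{poly}(B,t)$ rather than growing geometrically in $t$. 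You correctly identify this as the delicate point, phrase it as the recurrence $\beta_t \le P(\beta_{t-1}, B)$, and observe that unrolling it na\"{\i}vely gives $\mathrm{poly}(B)^{O(T)}$, which is too weak. Your proposed fix --- arguing that the $X_k$-unique local basis is an intrinsic invariant of $\rho_t$ whose canonical description size is governed by the input data rather than the arithmetic history, so $\beta_t \le \mathrm{poly}(B,t)$ --- is a genuinely different argument from anything the paper writes down, and is in my view needed for the claim to stand. (The paper's only nod to this issue is the informal remark in Sec.~\ref{Sec:ExactnessAndEfficiency} that the algebraic-number cost is ``a fixed overhead,'' which addresses the surd cancellation but not the rational bit-size growth.) Where your proposal is weaker is exactly where you flag it to be: the assertion that the FRASEs of $\rho_t$ are permutations of FRASEs of $\rho_0$ and that the $L_t$ are built from $L_0$ ``by applying the gate structure'' is stated but not proved, and the non-canonical choices in Gram--Schmidt and in selecting a rank-one representative from a degenerate block need a concrete argument that a $\mathrm{poly}(B,t)$-size representative can always be chosen and will actually be output. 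If you can make that step precise you will have a strictly more rigorous proof than the one in the paper; as written, both proofs leave the cross-step bit-size control incompletely justified, but yours at least makes the gap visible and proposes a plausible mechanism to close it.
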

\begin{proof}
See Appendix~\ref{Sec:AppendixForThree}.
\end{proof}

\section{Concordant computation with gates which are consistent with incompatible choices for the local basis}
\label{Sec:IncompatibleLBSolns}

The LBF cannot be successfully applied in all cases: It is possible that the output of the LBF for a given unitary is not unique. In this case, the LBF generates incompatible multiple solutions (where the notation of compatibility is laid out in Sec.~\ref{Sec:ExplanationOfLBF}).  This causes the simulation algorithm to fail, since an incorrect local basis may be chosen which would then cause the simulation algorithm to generate an entangling trajectory. A linear number of such events would lead to an exponential number of trajectories which would need to be tested, leading to an inefficient algorithm. Here, we explore some cases where this non-uniqueness arises.  Note that the LBF will commonly output multiple outputs in ways which do not disrupt the algorithm, arising for example from reordering of the local-basis projectors. We are not interested in such cases here, since they are easy to identify and unproblematic, and we focus only on cases where the outputs are truly incompatible.

Where the LBF outputs such incompatible solutions, there is  ambiguity for the corresponding local-basis update, and additional information is required to derive valid simulation trajectories.  We have tested our LBF numerically, by applying it to gates of the form $G=LDBL^\prime$ where $L$, $D$, $B$ and $L^\prime$ were generated randomly in keeping with the general form laid out in Lemma~\ref{lemma:LDBLGateDecomposition}.  In our numerical studies, we found that the LBF did not output incompatible solutions in a large number of cases.  However, we have also found special cases where the LBF outputs incompatible solutions for local-basis projectors depending on how one-dimensional projectors for the input are combined, which we now illustrate using examples.

Our first special case is given by the gate $G_{\rm exc.1}$ which maps the computational basis to the basis of Bell states,  $G_{\rm exc.1}$: $\left\vert j,k\right\rangle \mapsto Z^j\otimes X^k\left( \frac{\left\vert 00\right\rangle +\left\vert 11\right\rangle }{\sqrt{2}}\right)$.
It is convenient to write the action of $G_{\rm exc. 1}$ in the Pauli basis:
\begin{eqnarray}
G_{\rm exc. 1}\left\vert 00\right\rangle\!\!\left\langle 00 \right\vert G_{\rm exc. 1}^\dag
&=&\frac{1}{4}\left( \openone+X\otimes X-Y\otimes Y+Z\otimes Z\right) \nonumber \\
G_{\rm exc. 1}\left\vert 01\right\rangle\!\!\left\langle 01 \right\vert G_{\rm exc. 1}^\dag
&=&\frac{1}{4}\left( \openone+X\otimes X+Y\otimes Y-Z\otimes Z\right) \nonumber \\
G_{\rm exc. 1}\left\vert 10\right\rangle\!\!\left\langle 10 \right\vert G_{\rm exc. 1}^\dag
&=&\frac{1}{4}\left( \openone-X\otimes X+Y\otimes Y+Z\otimes Z\right) \nonumber \\
G_{\rm exc. 1}\left\vert 11\right\rangle\!\!\left\langle 11 \right\vert G_{\rm exc. 1}^\dag
&=&\frac{1}{4}\left( \openone-X\otimes X-Y\otimes Y-Z\otimes Z\right)
\end{eqnarray}
Noting that $\left\vert 00\right\rangle\!\!\left\langle 00\right\vert +\left\vert 11\right\rangle\!\! \left\langle 11\right\vert =\frac{1}{2}\left( \openone+ZZ\right) $, and that by local rotations from the $Z$-basis to the $X$ and $Y-$bases also $\left\vert ++\right\rangle \!\!\left\langle ++\right\vert +\left\vert --\right\rangle \!\!\left\langle --\right\vert =\frac{1}{2}\left( \openone+XX\right) $ and $\left\vert +i+i\right\rangle \!\!\left\langle +i+i\right\vert +\left\vert -i-i\right\rangle \!\!\left\langle -i-i\right\vert =\frac{1}{2}\left( \openone+YY\right) $, we can see that $G_{\text{exc.1 }}$ maps rank-two projectors in the computational basis to rank-two FRASEs carrying either the $X$, $Y$ or $Z$ basis for both qubits. (Note that $G_{\rm exc. 1}$must assign the same local basis both to a projector and the difference of that projector with the identity.)

$G_{\rm exc. 1}$ has a rather exceptional structure that exploits special features of the Bell states.  In contrast, a generic class of gates
which generate FRASE's carrying incompatible local-bases is provided by controlled local unitaries.  A simple example for two qubits would be gate
$G_{\rm exc. 2}=cU$ which implements a rotation $U$ on qubit 2 from the computation basis to any-other qubit basis, controlled by qubit 1.  For the set of input FRASES $\left\{\vert 00\rangle\!\langle 00 \vert,\vert 01\rangle\!\langle 01 \vert,\vert 1\rangle\!\langle 1 \vert\otimes \openone \right\}$, $G_{\rm exc. 2}$ outputs FRASES with the computational basis for both qubits.  For the set of input FRASES $\left\{\vert 10\rangle\!\langle 10 \vert,\vert 11\rangle\!\langle 11 \vert,\vert 0\rangle\!\langle 0 \vert\otimes \openone \right\}$, $G_{\rm exc. 2}$ outputs FRASES with the computational basis for qubit 1, and the rotated basis for qubit 2.  Similar examples can be easily constructed for gates with support on arbitrary numbers of qudits, with arbitrary dimensions.

Our LBF routine heralds the occurrence of incompatible local-basis solutions, but is forced to stop in such cases in the absence of additional information concerning the correct solution to choose.  One possibility is to consider restricted instances of concordant computations which use only gates which do not generate incompatible solutions.  When however it is necessary to consider gates which generate incompatible solutions, it is clear that efficient heuristic tests will suffice to resolve local-basis ambiguities in many cases.  Another approach is to apply the LBF to extended sequences of gates with the aim of finding a unique local-basis update overall.  The efficiency of our simulation procedure however is only preserved when the local-basis ambiguity extends over gate sequences which scale logarithmically with respect to the number of circuit gates.

\section{Conclusions}
\label{Sec:Conclusions}

In this paper we contribute several new results on the problem of constructing efficient classical simulations for concordant computation.  These new results include a method to solve for local-basis updates using exact arithmetic, which we prove is efficient.  However, our results fall short of a proof that all instances of concordant computation admit efficient simulation or, on the contrary, that this is impossible in principle.  The fundamental difficulty for any simulation of concordant computation is the need to test properties for the full quantum state.  These tests generically involve exponentially-large matrices (other than in some special cases) and hence are inefficient.  The simulation procedure of Ref.~\cite{Eastin10} involves symmetry tests on the full quantum state.  It was proved by the author that these tests are computationally equivalent to solving 3-SAT, an NP-Complete problem, proving that the simulation cannot be efficient in general.

In this paper we have progressed beyond the simulation procedure in Ref.~\cite{Eastin10}, by attempting to bypass exhaustive symmetry testing on the quantum state.  Our alternative simulation procedure attempts to derive simulation trajectories directly from the circuit which is supplied in the problem.  This approach can work as the concordant promise heavily constrains the structure of the gates that make up the circuit.  Consequently our analysis has focused on our LBF subroutine, which is for deriving local-basis updates directly from the gates and input local-basis projectors.  Our most important contribution is proof that local-basis updates can be computed efficiently using exact arithmetic.  Furthermore, our investigations have uncovered two classes of special gates for which the LBF outputs multiple incompatible choices for the local-basis updates.  In such cases additional information about the quantum state is required to derive valid simulation trajectories.  We leave the problem of characterizing the full set of special gates as an open challenge.  It is also important to determine if these gates can be used to generate examples of concordant computation that cannot be reduced to those in the class of probabilistic reversible classical computation.

An entirely different approach to overcoming the inefficient symmetry tests of Ref.~\cite{Eastin10} would be to replace NP-hard exact symmetry tests on the quantum state with efficient probabilistic sampling.  Using ideas in Ref.~\cite{VandenNest11}, it is possible to devise efficient symmetry tests based on random sampling, where the probability for error is exponentially suppressed.  However, this approach gives rise to two challenges. The first is to understand the effects of errors within the simulation, which have been circumvented in this paper by using exact methods.  The second is to understand cases when probabilistic methods fundamentally cannot work.  Algorithms are highly structured by nature, and typically they are not well modelled as random processes.  It is possible that there are scenarios involving concordant computation which provably require knowledge of rare hard instances to achieve valid output statistics, where the hard instances foil probabilistic tests on the quantum state.  We leave a full analysis of these issues as an open problem.

\begin{acknowledgments}
H. C. acknowledges funding and support for this work by the National Research Foundation and Ministry of Education, Singapore, as well as the University of Bristol.  We thank Bryan Eastin and Kavan Modi for helpful discussions, and B. E. for detailed feedback on the full manuscript.
\end{acknowledgments}

\appendix

\section{Proof of Theorem~\ref{theorem:EfficientCliffordSymmetryTesting}}
\label{Sec:AppendixForTwo}

\noindent \textbf{Theorem \ref{theorem:EfficientCliffordSymmetryTesting}}
Suppose that $L_0^\dag\rho_0 L_0=\otimes _{j=1}^N\left( L_{0}^{\left( j\right) \dag }\rho _0^{\left( j\right) }L_0^{\left(j\right)}\right)$, which is factorised and diagonal in the computational basis, and that $S_\sigma$ is Clifford unitary on the $N$ qubits. Then: $S_\sigma L_0^\dag \rho _0 L_0 S_\sigma^\dag=L_0^\dag\rho_0 L_0$ if and only if $\textrm{Tr}\left[ \left( S_\sigma L_0^\dag\rho _0 L_0 S_\sigma^\dag - L_0^\dag\rho_0 L_0\right) Z^{\left( j\right) }\otimes \openone^{\left( \backslash j\right) }\right] =0\,\,\,\forall j$. The expectation values $\textrm{Tr}\left[ \left( S_\sigma L_{0}^\dag \rho _0 L_0 S_\sigma^\dag\right) Z^{\left( j\right)}\otimes \openone^{\left( \backslash j\right) }\right] $ and $\textrm{Tr}\left[ \left( L_0^\dag\rho_0 L_0 \right) Z^{\left( j\right) }\otimes \openone^{\left( \backslash j\right) }\right] $ can be computed efficiently, and the overall computational complexity for evaluating all the required expectation values scales quadratically with $N$.

\begin{proof}
The forward direction is trivial. For the reverse, we assume that
$\textrm{Tr}\left[ \left( S_{\sigma }L_{0}^{\dag }\rho _0 L_0 S_\sigma^\dag-L_0^\dag\rho_0 L_0\right) Z^{\left( j\right) }\otimes \openone^{\left( \backslash j\right) }\right] =0\,\,\,\forall j$.  It is necessary to establish that $S_\sigma$ leaves all products of $Z$ and $\openone$ operators unchanged.  (Trivally this holds for the identity.)
Since $L_0^\dag \rho _0 L_0$ is diagonal in the computational basis, for each qubit $j$ which is pure,   $\left\vert 0\right\rangle $ (or $\left\vert 1\right\rangle $), $\textrm{Tr}\left[ \left( L_0^\dag \rho _0 L_0\right) Z^{(j)}\otimes \openone^{\left( \backslash j\right) }\right]$ and $\textrm{Tr}\left[ \left( S_\sigma L_0^\dag \rho_0 L_0 S_\sigma^\dag\right) Z^{\left( j\right) }\otimes \openone^{\left( \backslash j\right) }\right]$ are $1$ (or $-1$); furthermore, $\left\vert 0\right\rangle $ (or $\left\vert 1\right\rangle $) is the only possible qubit state which gives $1$ (or -1). Hence equality of $S_\sigma L_0^\dag\rho_0 L_ 0 S_\sigma^\dag$ and $L_0^\dag\rho_0 L_0$ is established for the pure qubits.
For the remaining qubits $L_0^{\left( j\right) \dag }\rho_0^{\left( j\right) }L_0^{\left( j\right) }=\frac{1}{2}\left( \openone^{(j)}+q_j Z^{(j)}\right) $, and we denote the unique values for the $q_{j}$ by $Q_1$,$Q_2,\cdots $, where $1>Q_{1}>$\ $Q_{2}>Q_{3}\cdots >-1$. We also denote corresponding subsets of qubits with $q_j=Q_k$ by $J(Q_k)$.  The effect of $S_\sigma$ on $L_0^\dag \rho_0 L_0$ in the Pauli basis is to permute the expansion coefficients for $L_0^\dag \rho_0 L_0$ amongst the basis elements.

We now consider basis elements corresponding to the largest expansion coefficient $Q_1$ in $L_0^\dag\rho _0 L_0$ and $S_\sigma L_0^\dag\rho _0 L_0 S_\sigma^\dag$.  For each $j\in J(Q_1)$,
$\textrm{Tr} \left[ \left( L_0^\dag\rho _0 L_0 \right) Z^{\left( j\right) }\otimes \openone^{\left( \backslash j\right) } \right]
=
\textrm{Tr} \left[ \left( S_\sigma L_0^\dag\rho_0 L_0 S_\sigma^\dag \right) Z^{\left( j\right)}\otimes \openone^{\left( \backslash j\right)}\right]$
$\implies $
$S_{\sigma }Z^{\left( j\right) }\otimes \openone^{\left( \backslash j\right) }S_{\sigma }^{\dag }=Z^{\left( k\right) }\otimes \openone^{\left( \backslash k\right) }$ where  $q_{k}\in J(Q_{1})$. An analogous statement also holds when there are multiple $Z$ operators for multiple qubits of $J(Q_{1})$, and so $S_\sigma\left[ \otimes _{j\in J(Q_1)}\frac{1}{2}\left( \openone^{(j)}+Q_{1}Z^{(j)}\right) \right] \otimes \openone^{\left(/J(Q_1)\right)}S_\sigma^\dag=\left[ \otimes _{j\in J(Q_1)}\frac{1}{2}\left( \openone^{(j)}+Q_1 Z^{(j)}\right) \right] \otimes \openone^{\left(/J(Q_1)\right) }$ also.

Proceeding now to $Q_2$, a similar argument can be made as for $Q_1$. The following now holds for each $j\in J(Q_{2})$:
$\textrm{Tr}\left[ \left( L_0^\dag \rho_0 L_0 \right) Z^{\left(j\right)}\otimes \openone^{\left( \backslash j\right) }\right]=
\textrm{Tr} \left[ \left( S_\sigma L_0^\dag \rho _0 L_0 S_\sigma^\dag \right) Z^{\left( j\right) }\otimes \openone^{\left( \backslash j\right) }\right] $
$\implies $ $S_{\sigma }Z^{\left( j\right) }\otimes \openone^{\left( \backslash j\right) }S_\sigma^\dag=Z^{\left( k\right) }\otimes \openone^{\left( \backslash k\right) }$ where $q_k\in J(Q_2)$: a possible ambiguity might be considered when the value of $Q_2$ coincides with a power of $Q_1$, but in this case the corresponding basis element is already accounted for in the previous step.  Hence it can be concluded that $S_\sigma\left[ \otimes _{j\in J(Q_2)}\frac{1}{2}\left( \openone^{(j)}+Q_2 Z^{(j)}\right) \right] \otimes \openone^{\left(/J(Q_2)\right)}S_\sigma^\dag=\left[ \otimes _{j\in J(Q_2)}\frac{1}{2}\left( \openone^{(j)}+Q_2 Z^{(j)}\right) \right] \otimes \openone^{\left(/J(Q_2)\right) }$
The same argument can now be applied for $Q_3$. Possible ambiguities which might be considered when the value of $Q_3$ coincides with a product of $Q_1$'s and $Q_2$'s can be discounted, because the corresponding basis elements are accounted for previously.

By induction it follows that $S_\sigma L_0^\dag\rho_0 L_0 S_\sigma^\dag=L_0^\dag \rho_0 L_0$.  Finally, all expectation values that must be computed for the theorem are of the form for a Pauli product operator acting on $L_0^\dag \rho_0 L_0$, each of these can be computed (via Gottesman-Knill theorem \cite{Gottesman98} techniques) with linear complexity in $N$, there are $2N$ such expectation values to compute and therefore the overall scaling is quadratic.
\end{proof}

\section{Proof of Lemma~\ref{lemma:LBFComplexity}}
\label{Sec:AppendixForThree}

\noindent \textbf{Lemma \ref{lemma:LBFComplexity}} The computational complexity to solve for local-basis updates following Lemma~\ref{lemma:LBFExactImplementation} scales, in regard of both time and space (memory), polynomially with respect to the number of circuit gates (for a fixed maximum gate size), and the total bits required to represent each gate and the initial state of each qudit.

\begin{proof}
We will consider an arbitrary time step within the simulation, for which the LBF is applied to gate $G_t$ and qudit $j$, with dimension $d_j$.  The dimensions of the support of $G_t$ is $d$.  Estimates for the computational requirements of the different types of mathematical steps involved are as follows:
\begin{itemize}
\item {\it Arithmetic on integers:}
The computation requirements for multiplication dominate over those for addition. The (time) complexity for multiplying two $l$-digit numbers scales as $O(l^2)$ and the output has $2l$ digits.  For multiplication of matrices involving $l$-digit integer entries (for real and imaginary parts), the digit length for entries of the output matrix scales linearly with $l$, while the time complexity scales as third order in the matrix dimensions.
\item {\it Gaussian elimination/back substitution:}
Gaussian elimination and back substitution are used by the LBF first to find a solution $\tilde{\rho}^{(j)}$ to Eq.~(\ref{eq:lbfequations}),
and then to find eigenprojectors for $\tilde{\rho}^{(j)}$ (given specific eigenvalues).  The complexity for Gaussian elimination (which has cubic scaling with respect to the number of unknowns) dominates that for back substitution (for which the scaling is quadratic).  The linear system $\Xi$ which must be solved to find $\tilde{\rho}^{(j)}$ is overdetermined, and requires Gaussian elimination on a matrix with dimensions $O(d^2)$ by $O(d_j^2)$.  When Bareiss's algorithm is used, the number of elementary steps is $O(d^2 d_j^3)$ and the maximum number size is $O(d_{j}(\log d_{j}+l))$, where $l$ is the maximum number of digits for the matrix entries at the start (see Lecture 10 of \cite{CheeYap}).  Finding eigenprojectors for $\tilde{\rho}^{(j)}$ requires Gaussian elimination to a matrix with dimensions $d_{j}$ by $d_{j}$, which is integer for the case of an integer eigenvalue, but which has contributions which are algebraic numbers when the eigenvalue is an algebraic number.
\item {\it Root finding for integer polynomials of degree $p$:}
Root finding must be used to find the eigenvalues of $\tilde{\rho}^{(j)}$ (for which roots must be found for its characteristic equation of degree $p=d_{j}$), and it is also used for arithmetic operations on algebraic numbers.  Our method for root finding uses Sturm's theorem (see Lecture 7 of Ref.~\cite{CheeYap}.  Sturm's theorem states that the existence of roots within a given interval can be detected by evaluating a Sturm chain of $p+1$ polynomials at both interval endpoints, and taking the difference of the number of sign changes for the chain.  By testing for the existence of roots within each interval, the search region can be repeatedly subdivided to locate the roots.  If $l$ is the maximum digit-length of the polynomial coefficients, then the initial search region can be taken to be of size $O(2^{l})$ (e.g. using Cauchy's bound for polynomial roots).  For integer solutions the smallest search interval has length 2, and the number of bisections required to locate a root is $O(l)$.  For irrational solutions, an additional running time $O(p^2 l^2)$ is sufficient to identify an isolating interval (see Lecture 6 of Ref.~\cite{CheeYap}).
\item {\it Arithmetic on algebraic numbers:}
Gaussian elimination and back substitution must be performed on matrices mixing integers with irrational algebraic numbers when the eigenvalues of $\tilde{\rho}^{(j)}$ are irrational.  One method for performing arithmetic on algebraic numbers is as follows: Every algebraic number can be represented as an integer polynomial which has the number as a root, together with an isolating interval (e.g. see Lecture 6 of Ref.~\cite{CheeYap}). Arithmetic operations (i.e. addition, multiplication, number comparison, etc) can be done by performing simple computations on companion matrices associated with the polynomials (described for example in Ref.~\cite{LiLutzer04}), together with updates to isolating intervals using the bisection method described in above.  There is a large overhead for executing these arithmetic operations due to the need for Kronecker (tensor) product operations on the companion matrices.
\end{itemize}

\end{proof}


\begin{thebibliography}{40}

\bibitem{JozsaLinden03} Jozsa R and Linden N 2003 {\it  Proc. R. Soc. Lond. A} {\bf 459} 2011--32.

\bibitem{Vidal03} Vidal G 2003 {\it Phys. Rev. Lett.} {\bf 91} 147902.

\bibitem{Gottesman98} Gottesman D 1998 arXiv:quant-ph/9807006v1.

\bibitem{VanDenNest12} Van den Nest M 2013 {\it Phys. Rev. Lett.} {\bf 110} 060504.

\bibitem{IQP} Bremner M J, Jozsa R, and Shepherd D J 2011, {\it Proc. R. Soc. A} \textbf{467}, 459--472; Hoban M J \textit{et al}, 2014 {\it Phys. Rev. Lett.} \textbf{112} 140505.

\bibitem{Bosonsampling} Aaronson S and  Arkhipov A 2013 {\it Theor. Comput.} \textbf{9} 143--252.

\bibitem{KnillLaflamme98} Knill E and Laflamme R 1998 {\it Phys. Rev. Lett.} {\bf 81} 5672--75.

\bibitem{Datta05} Datta A, Flammia S T, and Caves C M 2005 {\it Phys. Rev. A} {\bf 72} 042316.

\bibitem{Datta08} Datta A, Shaji A, and Caves C M 2008 {\it Phys. Rev. Lett.} {\bf 100} 050502.

\bibitem{Dakic10} Daki\'{c} B, Vedral V, and Brukner \v{C} 2010 {\it Phys. Rev. Lett.} {\bf 105} 190502.

\bibitem{Eastin10} Eastin B 2010 arXiv:quant-ph/1006.4402v1.


\bibitem{eisertsim} Mari A and Eisert J 2012 \textit{Phys. Rev. Lett.} \textbf{109}, 230503.

\bibitem{emersonsim} Veitch V,  Ferrie  C, Gross D and Emerson J 2012 \textit{New J. Phys.} \textbf{14}, 113011.

\bibitem{stahlke} Stahlke D 2014  {\it Phys. Rev. A} {\bf 90}, 022302.

\bibitem{payashan} Pashayan H, Wallman J J, Bartlett S D 2015, \textit{Phys. Rev. Lett.} \textbf{115}, 070501.


\bibitem{Ferraro10} Ferraro A, Aolita L, Cavalcanti D, Cucchietti F M, and Ac\'{i}n A 2010 {\it Phys. Rev. A} {\bf 81} 052318.

\bibitem{CCstates} Oppenheim J, Horodecki M, Horodecki P, and Horodecki R 2002 {\it Phys. Rev. Lett.} {\bf 89} 180402;
Luo S 2008 {\it Phys. Rev. A} {\bf 77} 042303.

\bibitem{Discord} Zurek W H 2000 {\it Ann. Phys. (Leipzig)} {\bf 9} 855--64;  Henderson L and Vedral V 2001 {\it J. Phys. A} {\bf 34} 6899--905.

\bibitem{Modi12} Modi K, Brodutch A, Cable H, Paterek T, and Vedral V 2012 {\it Rev. Mod. Phys.} {\bf 84} 1655--1707.

\bibitem{Chen11} Chen L, Chitambar E, Modi K, and Vacanti G 2011 Phys. Rev. A {\bf 83} 020101(R).

\bibitem{NielsenChuang} Nielsen M A and Chuang I L 2000, {\it Quantum Computation and Quantum Information} (Cambridge Series on Information and the Natural Sciences), Cambridge University Press.

\bibitem{CheeYap} Yap C K 1999, {\it Fundamental Problems of Algorithmic Algebra}, Oxford University Press, first ed.

\bibitem{LiLutzer04} Li C-K and Lutzer D 2004 {\it Coll. Math. J.} {\bf 35} 307--09.



\bibitem{QCstates} Ollivier H and Zurek  W H 2001 {\it Phys. Rev. Lett.} {\bf 88} 017901.

\bibitem{LinearAlgebra} Loehr N 2014 {\it Advanced Linear Algebra} CRC Press, Taylor and Francis Group, chapter 7.

\bibitem{PreskillLectureNotes} Preskill J {\it Lecture Notes for Physics 219/Computer Science 219},
http://www.theory.caltech.edu/people/preskill/ph229/, chap.~6.

\bibitem{nickjones} Jones N S and  Linden N 2005 {\it Phys. Rev. A} {\bf 71} 012324.


\bibitem{VandenNest11} Van den Nest M 2011 {\it Quant. Inf. Comp.} {\bf 11} 784--812.

\end{thebibliography}
\end{document}